\newtheorem{theorem}{Theorem}
\newtheorem{definition}{Definition}
\newtheorem{proposition}{Proposition}
\newtheorem{corollary}{Corollary}
\DeclarePairedDelimiter\floor{\lfloor}{\rfloor}
\g@addto@macro\bfseries{\boldmath}\makeatother
\newcommand{\appendixref}[1]{\hyperref[#1]{appendix~\ref{#1}}}
\def\equationautorefname~#1\null{eq.\,(#1)\null}
\newcommand{\ie}{\begin{equation}\begin{aligned}}
\newcommand{\fe}{\end{aligned}\end{equation}}
\newcommand{\lket}[1]{\langle #1 |}
\newcommand{\rket}[1]{|#1\rangle}
\newcommand{\lrket}[2]{\langle #1 |#2\rangle}
\newcommand{\projection}{\hat{\mathcal{P}}}
\newcommand{\hamiltonian}{\hat{H}}
\crefname{appendix}{App.}{Apps.}
\crefname{equation}{Eq.}{Eqs.}
\crefname{figure}{Fig.}{Figs.}
\crefname{table}{Tab.}{Tabs.}
\crefname{section}{Sec.}{Secs.}
\crefname{definition}{Definition}{Definition}
\crefname{lemma}{Lemma}{Lemma}
\crefname{theorem}{Theorem}{Theorem}
\crefname{proposition}{Proposition}{Proposition}
\crefname{corollary}{Corollary}{Corollary}
\newcommand{\tket}[1]{\ket{\;\tikz[baseline={([yshift=-.5ex]current bounding box.center)}]{#1}\;}}
\definecolor{mydarkgreen}{rgb}{0.0, 0.5, 0.0}
\newcommand{\bs}{\textcolor{blue}{\bullet}}
\newcommand{\gs}{\textcolor{mydarkgreen}{\bullet}}
\newcommand{\rs}{\textcolor{red}{\bullet}}
\begin{document}

\title{Bridging Commutant and Polynomial Methods for Hilbert Space Fragmentation}
\date{\today}

\author{Bo-Ting Chen}
\thanks{These authors contributed equally to this work.}
\affiliation{Department of Physics, Princeton University, Princeton, New Jersey 08544, USA}
\author{Yu-Ping Wang}
\thanks{These authors contributed equally to this work.}
\affiliation{Department of Physics, Stony Brook University, Stony Brook, New York 11790, USA}
\author{Biao Lian}
\affiliation{Department of Physics, Princeton University, Princeton, New Jersey 08544, USA}

\begin{abstract}
A quantum model exhibits Hilbert space fragmentation (HSF) if its Hilbert space decomposes into exponentially many dynamically disconnected subspaces, known as Krylov subspaces. A model may however have different HSFs depending on the method for identifying them. Here we establish a connection between two vastly distinct methods recently proposed for identifying HSF: the commutant algebra (CA) method and integer characteristic polynomial factorization (ICPF) method. For a Hamiltonian consisting of operators admitting rational number matrix representations, we prove a theorem that, if its center of commutant algebra have all eigenvalues being rational, the HSF from the ICPF method must be equal to or finer than that from the CA method. We show that this condition is satisfied by most known models exhibiting HSF, for which we demonstrate the validity of our theorem. We further discuss representative models for which ICPF and CA methods yield different HSFs. Our results may facilitate the exploration of a unified definition of HSF. 
\end{abstract}

\maketitle

Thermalization of out-of-equilibrium closed quantum systems has attracted extensive interests of study in both theory and experiment in recent years. While most non-integrable quantum many-body systems obey the eigenstate thermalization hypothesis (ETH) \cite{deutsch1991quantum,srednicki1994chaos,rigol2008thermalization, polkovnikov2011colloquium} and thermalize quickly, there are notable exceptions. For instance, Many-body localized systems \cite{anderson1958,gornyi2005,baa2006,rahul2015review,altman2015review,abanin2018review} are prevented from thermalization by effective quasi-local conserved quantities. Another class of examples are models with quantum many-body scars \cite{bernien2017probing,turner2017quantum, moudgalya2018exact, choi2018emergent,ho2018periodic,moudgalya2018nonint,serbyn_review,chandran_review}, which are a measure zero set of eigenstates violating ETH and can give rise to non-thermalizing dynamics when they significantly overlap with the initial state. Recently, Hilbert space fragmentation (HSF) \cite{Sala_2020,khemani2020hilbert,rakovszky2019statistical,moudgalya2019thermalization,sanjay_review} has been extensively studied as a new mechanism for breaking ergodicity and thermalization. A quantum many-body model with system volume $L$ exhibits HSF if its Hilbert space decomposes into $\exp(aL)$ ($a>0$) number of dynamically disconnected subspaces known as \emph{Krylov subspaces}, which lead to non-ergodic dynamics. 

Importantly, a model may have different HSFs depending on how they are identified. Two distinct methods for identifying HSF proposed recently are the commutant algebra (CA) method~\cite{moudgalya2021hilbert,Moudgalya2023_numerical}, and the integer characteristic polynomial factorization (ICPF) method~\cite{Regnault:2022ocy}. The CA method deals with families of Hamiltonians, which elucidates how Krylov subspaces are analogous to and distinct from conventional symmetry sectors, and can be extended to open quantum systems \cite{Li2023HSFopen}. However, it is computationally highly expensive, and only a few models have been solved analytically \cite{moudgalya2021hilbert,Moudgalya2024_QMBS_CA,Yves2025minimalHubbard}. The ICPF method deals with Hamiltonians admitting rational number matrix representations (denoted as \emph{rational operators} hereafter), which is numerically significantly more efficient, but its physical meaning is unclear. 

Interestingly, we find the CA and ICPF methods yield the same HSF for many known models, including the pair-flip model~\cite{caha2018pairflip}, Temperley-Lieb models~\cite{PhysRevB.40.4621, Aufgebauer_2010, M.T.Batchelor_1990, Read2007ti}, $t$-$J_z$ model~\cite{PhysRevB.55.6491, PhysRevLett.85.4755} and dipole-conserving model~\cite{Sala_2020,khemani2020hilbert,Moudgalya_2021}, etc (see SM \cite{SM}). This motivates us to investigate the relation between these two vastly different methods. In this letter, we prove a theorem for Hamiltonians from linear combinations of rational operators: under the condition that all the rational operators in the center of their commutant algebra have rational eigenvalues, HSF from the ICPF method must be equal to or finer than HSF from the CA method. This condition is satisfied by all the aforementioned models. We also prove that if ICPF yields a finer HSF than CA, its Krylov subspace basis must depend on Hamiltonian coefficients. We verify our conclusions in the above models, as well as in additional representative models with distinct HSFs from the two methods.

\emph{Setup}. We study quantum systems in a lattice with $L$ sites, where each site has a finite-dimensional Hilbert space. The Hamiltonian takes the generic form:
\ie
\label{eqn: Hamiltonian in J}
\hamiltonian(\{J_\eta\})\equiv \sum_{\eta=1}^{N}J_{\eta} \hat{h}_{\eta}
\ ,\fe
where $\hat{h}_\eta$ are hermitian terms supported locally on finite numbers of sites, $J_{\eta}$ are real coefficients, and $N$ is the total number of terms ($N\sim \mathcal{O}(L)$ for a local Hamiltonian). Importantly, in order for the ICPF method \cite{Regnault:2022ocy} of HSF to apply, we assume there exists an \emph{orthonormal} basis $\ket{u_i}$, denoted as the \emph{Hamiltonian basis}, in which every operator $\hat{h}_\eta$ is a rational operator. Here we define an operator $\hat{h}$ as a \emph{rational operator} if all of its matrix elements in the Hamiltonian basis are rational numbers, i.e., $\langle u_i|\hat{h}|u_j\rangle \in \mathbb{Q}(i)=\mathbb{Q}+i\mathbb{Q}$. This assumption, while seemingly restrictive, is satisfied by most quantum models exhibiting HSF, and the Hamiltonian basis is usually a product state basis.

A quantum model is said to exhibit HSF if there exists a Hilbert space basis in which the Hamiltonian takes a block diagonal form $\hat{H}=\bigoplus_{k=1}^{N_K}K_k(\hat{H})$, and the number of blocks $N_K$ grows exponentially as $\exp(aL)$ with some $a>0$. Each block $K_k(\hat{H})$ is a $D_k\times D_k$ matrix ($D_k\ge1$) acting on a $D_k$-dimensional Hilbert space $\mathcal{H}_k$, which is called a \emph{Krylov subspace}. We can define one of its basis vector as the root state $|\Psi_k\rangle$, thus the Krylov subspace $\mathcal{H}_k$ is spanned by $D_k$ linearly independent state vectors $\{\ket{\Psi_k}, \hamiltonian\ket{\Psi_k}, \cdots, \hamiltonian^{D_k-1}\ket{\Psi_k}\}$. By definition, $\hamiltonian^{D_k}\ket{\Psi_k}$ will be linearly dependent on the above $D_k$ vectors, thus the Krylov subspace is closed under the action of $\hamiltonian$.

The HSF of a model is basis dependent and may not be unique. Recently, two inequivalent methods, the ICPF method \cite{Regnault:2022ocy} and the CA method \cite{moudgalya2021hilbert,Moudgalya2023_numerical}, are proposed for identifying the HSF. The aim of this work is to establish the connection between them. In the below, we first briefly review the two methods.

\emph{The ICPF method}. 
This method \cite{Regnault:2022ocy} identifies the HSF by taking certain rational coefficients (of interest) $J_\eta\in\mathbb{Q}$ in the Hamiltonian $\hat{H}$ of \cref{eqn: Hamiltonian in J}, and then factorizing its characteristic polynomial into irreducible rational polynomials: 
\ie
\label{eqn: ICPF factorization form}
\Phi(\mu)\equiv \det\big(\mu \mathds{1} - \hamiltonian(\{J_{\eta}\})\big)
= \prod_{p} \left[ \phi_{p}^{(D_p)}(\mu) \right]^{d_{p}},
\fe
where $D_p$ and $d_p$ are positive integers. Each factor $\phi_{p}^{(D_p)}(\mu)$ is a degree $D_p$ polynomial in $\mu$ with all the coefficients being rational numbers, thus called a \emph{rational polynomial}, and $d_p$ is its multiplicity. Each $\phi_{p}^{(D_p)}(\mu)$ is irreducible (not rationally factorizable). Such a factorization exists, since the Hamiltonian $\hat{H}$ and thus the characteristic polynomial $\Phi(\mu)$ have rational coefficients in the Hamiltonian basis. Note that one can always turn $\Phi(\mu)$ into an integer coefficient polynomial $\widetilde{\Phi}(\mu)=q\Phi(\mu)$ by multiplying some integer $q$. As shown in \cite{Regnault:2022ocy}, the rational factorization of \cref{eqn: ICPF factorization form} is unique and equivalent to factorizing $\widetilde{\Phi}(\mu\}$ into irreducible integer polynomials (thus the name ICPF), by Gauss's lemma \cite{atiyah2018introduction}.

As proved in \cite{Regnault:2022ocy}, \cref{eqn: ICPF factorization form} implies a fragmentation of the full Hilbert space $\mathcal{H}$ into a direct sum
\ie
\label{eqn: Hdecomp-ICPF}
\mathcal{H} = \bigoplus_{p}\Big[\bigoplus_{\delta_p=1}^{d_p}\mathcal{H}_{p,\delta_p}\Big]\ ,
\fe
where each $\mathcal{H}_{p,\delta_p}$ is a dimension $D_p$ Krylov subspace that one-to-one corresponds to a polynomial factor $\phi_{p}^{(D_p)}(\mu)$, and $1\le \delta_p\le d_p$ labels $d_p$ degenerate Krylov subspaces (which have identical matrix representations of the Hamiltonian $\hat{H}$). Importantly, the root state $|\Psi_{p,\delta_p}\rangle$ and hence the whole basis (generated by $\hamiltonian$ from $|\Psi_{p,\delta_p}\rangle$) of each Krylov subspace $\mathcal{H}_p^{(\delta_p)}$ are \emph{rational states}. Here a \emph{rational state} $|\Psi\rangle=\sum_{j}\psi_j|u_j\rangle$ is defined to have rational ratios $\psi_i/\psi_j\in\mathbb{Q}(i)$ between any two nonzero components in the Hamiltonian basis $|u_j\rangle$.

\emph{The CA method}.
This approach identifies the HSF based on the commutant algebra of the Hamiltonian terms $\{ \hat{h}_{\eta} \}$ in \cref{eqn: Hamiltonian in J}. Let $\mathcal{A}$ be the von Neumann algebra generated by $\{ \hat{h}_{\eta} \}$, which is the set of operators given by additions and multiplications among $\{\hat{h}_{\eta}\}$ and any complex numbers. Let $\mathcal{C}$ be its commutant algebra defined as the set of operators commuting with all terms $\{ \hat{h}_{\eta} \}$. The two mutually commuting algebras $\mathcal{A}$ and $\mathcal{C}$ allow a decomposition of the Hilbert space $\mathcal{H}$ into irreducible representations (irreps) of $\mathcal{A} \times \mathcal{C}$\cite{Harlow:2016vwg}, leading to an HSF into a direct sum of virtual bipartitions:
\begin{equation}
	\label{eqn: Hdecomp}
	\mathcal{H} = \bigoplus_{\lambda}
	\left[ \mathcal{H}_{\lambda}^{(\mathcal{A})}
		\otimes
		\mathcal{H}_{\lambda}^{(\mathcal{C})} \right],
\end{equation}
where $\mathcal{H}_{\lambda}^{(\mathcal{A})}$ is a $D_\lambda$ dimensional irrep of $\mathcal{A}$, and $\mathcal{H}_{\lambda}^{(\mathcal{C})}$ is a $d_\lambda$ dimensional irrep of $\mathcal{C}$. Particularly, if $\mathcal{C}$ is an Abelian algebra, all $d_\lambda\equiv 1$. Each bipartition sector is labeled by a distinct summand index $\lambda$. In the basis of \cref{eqn: Hdecomp}, an operator in $\mathcal{A}$ (e.g. the Hamiltonian $\hamiltonian$) has a block diagonal matrix form $\hat{h}_\mathcal{A} =\bigoplus_{\lambda}\big[ M_{\lambda}(\hat{h}_\mathcal{A})\otimes\mathds{1}_{d_{\lambda}} \big]$, while an operator in $\mathcal{C}$ takes the form $\hat{h}_\mathcal{C} =\bigoplus_{\lambda}\big[ \mathds{1}_{D_{\lambda}}\otimes N_{\lambda}(\hat{h}_\mathcal{C}) \big]$. 
Thus, the Hilbert space of each summand $\lambda$ decomposes into $d_\lambda$ degenerate Krylov subspaces $\mathcal{H}_{\lambda,\delta_\lambda}$ ($1\le\delta_\lambda\le d_\lambda$), each of dimension $D_\lambda$:
\ie
\label{eqn: HCA-decomp}
\mathcal{H}_{\lambda}^{(\mathcal{A})}\otimes\mathcal{H}_{\lambda}^{(\mathcal{C})}=\bigoplus_{\delta_\lambda=1}^{d_\lambda}\mathcal{H}_{\lambda,\delta_\lambda}\ .
\fe
For later purpose, we denote an orthogonal (not necessarily normalized) basis $\ket{\lambda, \delta_\lambda, \Delta_\lambda}$ ($\Delta_\lambda=1,\cdots,D_\lambda$) for each Krylov subspace $\mathcal{H}_{\lambda,\delta_\lambda}$ as a \emph{CA Krylov basis}.

\emph{Bridging ICPF with CA.} The Krylov subspaces identified by ICPF and CA can be generically different. To reveal the connection between them, we define a partial-order relation as follows: 
\begin{itemize}
\item If for any Krylov subspace $\mathcal{H}_{p,\delta_p}$ from ICPF (\cref{eqn: Hdecomp-ICPF}), there exists a summand $\lambda$ and an index $\delta_\lambda$ in some CA Krylov basis (\cref{eqn: HCA-decomp}) such that $\mathcal{H}_{p,\delta_p}\subset \mathcal{H}_{\lambda,\delta_\lambda}$, we say ICPF $\succeq$ CA, meaning that ICPF gives an equal or finer HSF. 
\item If there exists a CA Krylov basis such that for any $\lambda$ and $\delta_\lambda$, there is some $p$ and $\delta_p$ such that $\mathcal{H}_{\lambda,\delta_\lambda}\subset\mathcal{H}_{p,\delta_p}$, we say ICPF $\preceq$ CA (CA gives an equal or finer HSF).
\end{itemize}
If both of the above relations hold, we have ICPF $=$ CA and their HSFs are identical. If only one of the relation holds, the partial order is strict, and one has ICPF $\succ$ CA or ICPF $\prec$ CA. 

Specifically, the relation ICPF $\succeq$ CA is equivalent to the existence a CA Krylov basis where all basis states $\ket{\lambda, \delta_\lambda, \Delta_\lambda}$ are rational states. This is because the HSF identified by ICPF necessarily has a rational basis of Krylov subspaces. If one further has identical dimension $D_\lambda=D_p$ in each block, one has ICPF $=$ CA, as ensured by uniqueness of the factorization in \cref{eqn: ICPF factorization form}.

However, explicitly constructing the CA Krylov basis for a given model could be rather involved. In the below, we show that it is sufficient to tell whether ICPF $\succeq$ CA simply from the algebraic structure of $\mathcal{A}$, and there is no need to construct the Krylov basis. 

To this end, for the two commuting algebras $\cal A$ and $\cal C$ defined earlier, we define the center ${\cal Z } = {\cal A }\cap {\cal C}$, which is a self-commuting algebra. In the CA basis of \cref{eqn: Hdecomp}, $\cal Z$ consists of all operators of the form $\hat{h}_\mathcal{Z} =
\bigoplus_{\lambda}
\left[ 
c_{\lambda}(\hat{h}_\mathcal{Z})
\mathds{1}_{D_{\lambda}}
\otimes
\mathds{1}_{d_{\lambda}} 
\right]
$, where $c_{\lambda}(\hat{h}_\mathcal{Z})$ is some eigenvalue of $\hat{h}_\mathcal{Z}$. The center $\mathcal{Z}$ thus serves as a handy tool for identifying distinct summands $\lambda$. We define the basis of the center $\cal Z$, denoted by $\bold{Basis}(\mathcal{Z})=\{\hat{z}_k\}$, as a minimal set of \emph{rational operators} $\hat{z}_k\in \cal Z$ that generates $\cal Z$. 
The existence of such a set of rational operators $\hat{z}_k$ is ensured by the fact that both $\mathcal{A}$ and $\mathcal{C}$ are generated by rational operators. We then prove the following theorem:

\begin{theorem}[\textbf{The main theorem}]
\label{theorem: main theorem}
	Assume $\mathcal{Z}$ is the center of commutant algebra of a quantum model. If every element $\hat{z}_k\in \bold{Basis}(\mathcal{Z})$ (which is a rational operator) has all its eigenvalues $c(\hat{z}_k)$ being rational (i.e. $c(\hat{z}_k)\in \mathbb{Q}(i)$), the model has ICPF $\succeq$ CA.
\end{theorem}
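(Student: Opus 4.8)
The plan is to prove the equivalent statement already recorded in the excerpt: under the hypothesis there exists a CA Krylov basis all of whose states $\ket{\lambda,\delta_\lambda,\Delta_\lambda}$ are \emph{rational states}. This suffices, because once such a basis is produced, each CA Krylov subspace $\mathcal{H}_{\lambda,\delta_\lambda}$ is a rational, $\hamiltonian$-invariant subspace; applying the ICPF factorization of \cref{eqn: ICPF factorization form} inside it yields ICPF Krylov subspaces contained in $\mathcal{H}_{\lambda,\delta_\lambda}$, which is exactly the statement ICPF $\succeq$ CA. I would build this rational basis in two stages: first separate the summands $\lambda$ using rational projectors, then rationally decompose each summand into its $d_\lambda$ Krylov blocks.

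First stage (separating the summands). For each generator $\hat{z}_k\in\mathbf{Basis}(\mathcal{Z})$ the hypothesis gives a rational spectrum, so each spectral projector is a Lagrange-interpolation polynomial $\prod_{c'\neq c}(\hat{z}_k-c')/(c-c')$ with coefficients in $\mathbb{Q}(i)$; since $\hat{z}_k$ is a rational operator, this projector is again rational. Because the $\hat{z}_k$ commute and generate $\mathcal{Z}$, their joint eigenspaces are exactly the summands $\lambda$, and the projector $\Pi_\lambda$ onto summand $\lambda$ is the product of the corresponding commuting rational spectral projectors, hence rational. Consequently $V_\lambda=\Pi_\lambda\mathcal{H}$ is spanned by rational vectors, and since $\Pi_\lambda\in\mathcal{Z}\subseteq\mathcal{C}$ commutes with every $\hat{h}_\eta$ and thus with $\hamiltonian$, each $V_\lambda$ is $\hamiltonian$-invariant. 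This is where the rational-eigenvalue assumption enters in an essential way.

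Second stage (rationalizing within a summand). Restricting the generators to $V_\lambda$ produces rational operators generating the block algebra $\mathcal{A}_\lambda=\Pi_\lambda\mathcal{A}\Pi_\lambda$ and its commutant $\mathcal{C}_\lambda=\Pi_\lambda\mathcal{C}\Pi_\lambda$; over $\mathbb{C}$ these are the full matrix algebras acting on $\mathcal{H}_\lambda^{(\mathcal{A})}$ and $\mathcal{H}_\lambda^{(\mathcal{C})}$, while over $\mathbb{Q}(i)$ the rationality of the center forces each to be central simple with center exactly $\mathbb{Q}(i)\Pi_\lambda$. The goal is to exhibit $d_\lambda$ rational root states $\ket{\Psi_{\lambda,\delta_\lambda}}$ whose $\mathcal{A}_\lambda$-orbits are the $D_\lambda$-dimensional CA Krylov blocks: concretely, I would seek a rational system of minimal idempotents (matrix units) in $\mathcal{C}_\lambda$, apply them to rational seed vectors to carve $V_\lambda$ into rational copies of $\mathcal{H}_\lambda^{(\mathcal{A})}$, and take a rational basis of each copy (spanned by applying rational elements of $\mathcal{A}_\lambda$). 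The Hermitian adjoint, which maps $\mathbb{Q}(i)$-rational operators to rational operators and is positive definite, supplies orthogonality and is the natural tool for controlling these idempotents.

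The main obstacle is precisely this second stage: guaranteeing that a single $D_\lambda$-dimensional CA Krylov block can be spanned by rational vectors. This is equivalent to the central simple block $\mathcal{A}_\lambda$ being \emph{split} over $\mathbb{Q}(i)$, i.e.\ isomorphic to $M_{D_\lambda}(\mathbb{Q}(i))$ and admitting rational matrix units; if instead a nontrivial division algebra survived, the smallest rational $\mathcal{A}_\lambda$-invariant subspace would be a proper multiple of $\mathcal{H}_\lambda^{(\mathcal{A})}$ and no rational single-block basis could exist. Thus the crux is to show that the rational-center hypothesis, together with the explicit rational faithful representation and its positive involution, rules out nonsplit blocks, after which the idempotent construction delivers the rational CA Krylov basis and the theorem follows. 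I expect this splitting step, rather than the clean projector construction of the first stage, to require the most care.
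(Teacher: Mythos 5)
Your first stage is essentially identical to the paper's central construction: the Lagrange-interpolation product of rational spectral projectors of the rational generators $\hat{z}_k$, yielding rational summand projectors $\projection_\lambda$ (this is exactly \cref{eqn: explicit expression of P lambda k}), and this part of your argument is correct. The genuine gap is precisely the step you flag yourself: your second stage---producing rational matrix units in $\mathcal{C}_\lambda$, i.e.\ splitting the central simple block over $\mathbb{Q}(i)$---is never carried out, and it does not follow from the tools you invoke. The Brauer group of $\mathbb{Q}(i)$ is nontrivial, so a central simple algebra with center $\mathbb{Q}(i)$ need not be a matrix algebra over $\mathbb{Q}(i)$; moreover, the adjoint here is an involution of the \emph{second} kind (it conjugates $i$), and Albert's classification admits positive involutions of the second kind on genuinely nonsplit division algebras over imaginary quadratic fields. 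So positivity of $\dagger$ alone cannot rule out nonsplit blocks, and any rescue would have to exploit finer structure of the ambient rational representation that you have not supplied. As written, your proposal establishes only the rational projectors and then reduces the theorem to an open (and harder) algebraic problem.

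The missing idea---and the paper's actual route---is that one never needs a rational CA Krylov \emph{basis}, only rational characteristic polynomials. Since $\projection_\lambda$ and $\hamiltonian$ are rational, each block polynomial $\Phi_\lambda(\mu)=\mathrm{det}_\lambda(\mu\mathds{1}-\projection_\lambda\hamiltonian\projection_\lambda)$ is rational; the $d_\lambda$-fold degeneracy within the summand forces $\Phi_\lambda(\mu)=\big[\phi_\lambda^{(D_\lambda)}(\mu)\big]^{d_\lambda}$, and an elementary lemma (a monic polynomial whose $n$-th power has rational coefficients is itself rational, proved by induction on the coefficients or by a one-line Galois argument) shows $\phi_\lambda^{(D_\lambda)}$ is rational. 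Uniqueness of the irreducible rational factorization (Gauss's lemma) then forces every ICPF factor $\phi_p^{(D_p)}$ to divide some $\phi_\lambda^{(D_\lambda)}$, which yields $\mathcal{H}_{p,\delta_p}\subset\mathcal{H}_{\lambda,\delta_\lambda}$ directly, and hence ICPF $\succeq$ CA. Replacing your second stage with this polynomial argument---keeping your first stage verbatim---closes the gap while entirely sidestepping the splitting question.
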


\emph{Proof sketch.} Here we sketch the proof of \cref{theorem: main theorem}, the complete details of which can be found in the Supplemental Material (SM \cite{SM}). The key idea is to notice that in the CA Krylov basis of \cref{eqn: Hdecomp}, the Hamiltonian $\hamiltonian$ is block diagonal in $\lambda$, and thus the characteristic polynomial $\Phi(\mu)$ can be written as
\begin{equation}
\label{eqn: CA polynomial}
\Phi(\mu) 
= \prod_\lambda \Phi_\lambda(\mu)
\ , \quad 
\Phi_\lambda(\mu)
= 
\mathrm{det}_\lambda(\mu \mathds{1} - \projection_{\lambda} \hamiltonian \projection_{\lambda})
\ ,\end{equation}
where $\projection_{\lambda}$ denotes the projection operator into the $D_\lambda d_\lambda$ dimensional Hilbert space of each summand $\lambda$ in \cref{eqn: Hdecomp}, and $\Phi_\lambda(\mu)$ is the characteristic polynomial of the Hamiltonian in the block of summand $\lambda$. Explicitly, in terms of the CA Krylov basis, 
\begin{equation}
\projection_\lambda=\mathds{1}_{D_{\lambda}}
\otimes
\mathds{1}_{d_{\lambda}} =\sum_{\delta_\lambda=1}^{d_\lambda}\sum_{\Delta_\lambda=1}^{D_\lambda} \frac{\rket{\lambda, \delta_\lambda, \Delta_\lambda} \lket{\lambda, \delta_\lambda, \Delta_\lambda}}{\lrket{\lambda, \delta_\lambda, \Delta_\lambda}{\lambda, \delta_\lambda, \Delta_\lambda}}
\ .\end{equation}

Assume all the coefficients $J_\eta$ in \cref{eqn: Hamiltonian in J} are rational, hence the Hamiltonian $\hat{H}$ is a rational operator. Accordingly, as we will prove later, each $\projection_\lambda$ is also a rational operator. Thus, each polynomial $\Phi_\lambda(\mu)$ in \cref{eqn: CA polynomial} is by definition a rational polynomial, and $\Phi(\mu) = \prod_\lambda \Phi_\lambda(\mu)$ is readily a (reducible) rational factorization of $\Phi(\mu)$. Since each $\lambda$ summand consists of $d_\lambda$ degenerate Krylov subspaces $\mathcal{H}_{\lambda,\delta_\lambda}$, its characteristic polynomial can be further factorized into $\Phi_\lambda(\mu) = \big[ \phi_{\lambda}^{(D_\lambda)}(\mu) \big]^{d_{\lambda}}$, where $\phi_{\lambda}^{(D_\lambda)}$ (independent of $\delta_\lambda$) is the degree $D_\lambda$ characteristic polynomial of each Krylov subspace $\mathcal{H}_{\lambda,\delta_\lambda}$. Moreover, the rational nature of $\Phi_\lambda(\mu)$ ensures that $\phi_{\lambda}^{(D_\lambda)}(\mu)$ is also rational, which can be proved by explicit expansion, as well as by  Galois theory (see SM \cite{SM} for details). Thus, we arrive at a (reducible) rational factorization of the characteristic polynomial
$\Phi(\mu) = \prod_\lambda \big[ \phi_{\lambda}^{(D_\lambda)}(\mu) \big]^{d_{\lambda}}$. By uniqueness of the irreducible rational factorization, every $\phi_{p}^{(D_p)}(\mu)$ from ICPF in \cref{eqn: ICPF factorization form} must be a factor of some polynomial $\phi_{\lambda}^{(D_\lambda)}(\mu)$, which ensures the corresponding Krylov subspace $\mathcal{H}_{p,\delta_p}\subset \mathcal{H}_{\lambda,\delta_\lambda}$. Thus, ICPF $\succeq$ CA.

We now prove that every $\projection_\lambda$ is a rational operator. First, notice that $\projection_\lambda\in\mathcal{Z}$ is an element of the center, thus is generated by $\bold{Basis}(\mathcal{Z})=\{\hat{z}_k\}$. Assume $\hat{z}_k$ takes eigenvalue $c_\lambda(\hat{z}_k)$ in summand $\lambda$. The completeness of $\bold{Basis}(\mathcal{Z})$ then ensures $\projection_\lambda$ can be expressed as 
\begin{equation}
\label{eqn: explicit expression of P lambda k}
    \projection_{\lambda}
    =
    \prod_{k}
    \left(
    \prod_{\lambda^{\prime}\in \Lambda(\lambda,k)}
    \frac
    {\hat{z}_k - c_{\lambda^{\prime}}(\hat{z}_k)}
    {c_{\lambda}(\hat{z}_k) - c_{\lambda^{\prime}}(\hat{z}_k)}
    \right)
,\end{equation}
where we have defined $\Lambda(\lambda,k)=\{\lambda'\ |\  c_{\lambda^{\prime}}(\hat{z}_k) \neq c_{\lambda}(\hat{z}_k)\}$ to be the set of summands $\lambda'$ with $\hat{z}_k$ eigenvalue different from that of $\lambda$. Therefore, if all the eigenvalues of $\hat{z}_k$ are rational, every $c_\lambda(\hat{z}_k)\in \mathbb{Q}(i)$ is rational, and thus every $\projection_{\lambda}$ as expressed in \cref{eqn: explicit expression of P lambda k} is a rational operator. \qed

When the Hamiltonian basis $\{|u_i\rangle\}$ is a product state basis, HSF can be classified into two types: classical fragmentation (cHSF) and quantum fragmentation (qHSF). If every Krylov subspace is spanned by a subset of the Hamiltonian basis $\{|u_i\rangle\}$, the resulting HSF is defined as cHSF; otherwise, the HSF is defined as qHSF. Whether a model has cHSF or qHSF may depend on the HSF method, e.g., ICPF or CA (see examples in \cref{table: various models}). We then have the following corollary of \cref{theorem: main theorem}:

\begin{corollary}
\label{corollary1}
If a model has cHSF under the CA method, the model has ICPF $\succeq$ CA. 
\end{corollary}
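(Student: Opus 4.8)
The plan is to obtain \cref{corollary1} as an immediate consequence of \cref{theorem: main theorem} by verifying that cHSF under the CA method automatically forces the eigenvalue hypothesis of the theorem. Concretely, I would show that if the CA Krylov subspaces are all spanned by Hamiltonian basis states, then every $\hat{z}_k\in\mathbf{Basis}(\mathcal{Z})$ has only rational eigenvalues. First I would unpack the definition of cHSF: each Krylov subspace $\mathcal{H}_{\lambda,\delta_\lambda}$ in \cref{eqn: HCA-decomp} is spanned by a subset of the orthonormal Hamiltonian basis $\{\ket{u_i}\}$. Taking the union over the degenerate copies $\delta_\lambda=1,\dots,d_\lambda$, the entire summand $\mathcal{H}_\lambda^{(\mathcal{A})}\otimes\mathcal{H}_\lambda^{(\mathcal{C})}$ of \cref{eqn: Hdecomp} is then spanned by a subset $S_\lambda\subset\{\ket{u_i}\}$, so the summand projector $\projection_\lambda$ is simply the projector onto $\mathrm{span}(S_\lambda)$ and is diagonal in the Hamiltonian basis.

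Next I would invoke the defining property of the center recalled before the theorem: each $\hat{z}_k\in\mathcal{Z}$ acts as the scalar $c_\lambda(\hat{z}_k)$ on the whole summand $\lambda$, i.e. $\hat{z}_k=\sum_\lambda c_\lambda(\hat{z}_k)\,\projection_\lambda$. Combined with the previous step, this yields $\hat{z}_k\ket{u_i}=c_\lambda(\hat{z}_k)\ket{u_i}$ for every $\ket{u_i}\in S_\lambda$, so each Hamiltonian basis state is an eigenvector and $\hat{z}_k$ is diagonal in the Hamiltonian basis. Since $\hat{z}_k$ is by construction a rational operator, its diagonal entries satisfy $\langle u_i|\hat{z}_k|u_i\rangle=c_\lambda(\hat{z}_k)\in\mathbb{Q}(i)$. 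Thus every eigenvalue of every $\hat{z}_k$ is rational, the hypothesis of \cref{theorem: main theorem} holds, and ICPF $\succeq$ CA follows.

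The step requiring the most care---and the main obstacle I anticipate---is the passage from individual Krylov subspaces to whole summands: cHSF is phrased for the subspaces $\mathcal{H}_{\lambda,\delta_\lambda}$, while the center resolves the coarser summand index $\lambda$. I must confirm that the $d_\lambda$ degenerate copies sharing a given $\lambda$, each spanned by Hamiltonian basis states, really assemble into the bipartition sector $\mathcal{H}_\lambda^{(\mathcal{A})}\otimes\mathcal{H}_\lambda^{(\mathcal{C})}$ on which the center acts as a single scalar. This is guaranteed by the virtual-bipartition structure of \cref{eqn: Hdecomp}, but it should be made explicit rather than assumed, since otherwise $\projection_\lambda$ need not be diagonal in the Hamiltonian basis.

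Finally, it is worth noting a more economical route that bypasses \cref{theorem: main theorem} entirely. Recall the equivalence stated before the theorem: ICPF $\succeq$ CA holds exactly when there exists a CA Krylov basis consisting entirely of rational states. Under cHSF, each $\mathcal{H}_{\lambda,\delta_\lambda}$ already comes equipped with an orthonormal spanning set drawn from $\{\ket{u_i}\}$, and each such $\ket{u_i}$ has a single nonzero component and is therefore trivially a rational state. This orthogonal family is then a rational CA Krylov basis, so the equivalence delivers ICPF $\succeq$ CA at once. I would present the \cref{theorem: main theorem} route as the primary argument for coherence with the preceding result and note this shortcut as a remark.
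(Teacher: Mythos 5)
Your proof is correct; the difference from the paper is one of logical packaging rather than substance. The paper's one-line argument observes that under cHSF the CA Krylov basis coincides with the Hamiltonian basis, so each summand projector $\projection_\lambda$ is diagonal with entries $0$ or $1$ and hence rational; this feeds directly into the intermediate step of the proof of \cref{theorem: main theorem} (rationality of the $\projection_\lambda$ is precisely what the eigenvalue hypothesis was used to secure), effectively bypassing the hypothesis on $\mathbf{Basis}(\mathcal{Z})$. You instead verify that hypothesis itself: from $\hat{z}_k=\sum_\lambda c_\lambda(\hat{z}_k)\,\projection_\lambda$ and the diagonality of $\projection_\lambda$ you conclude each $\hat{z}_k$ is diagonal in the Hamiltonian basis with diagonal entries $c_\lambda(\hat{z}_k)=\langle u_i|\hat{z}_k|u_i\rangle\in\mathbb{Q}(i)$, so \cref{theorem: main theorem} applies as a black box. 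Both arguments rest on the same core observation (under cHSF, everything in $\mathcal{Z}$ is diagonal in the Hamiltonian basis); yours makes the corollary a literal instance of the theorem and additionally shows that cHSF \emph{forces} the rational-eigenvalue condition, while the paper's route is shorter because it re-enters the theorem's proof at the point where rationality of $\projection_\lambda$ is needed. Your care about the summand-versus-Krylov-subspace distinction is well placed and correctly resolved: by \cref{eqn: HCA-decomp} the summand is the direct sum of the $d_\lambda$ degenerate copies, and since distinct subspaces in a direct sum intersect trivially, the spanning subsets of Hamiltonian basis states are disjoint and their union spans the summand, so $\projection_\lambda$ is indeed diagonal. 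Your remarked shortcut via the rational-basis equivalence is also valid within the paper's framework (each $\ket{u_i}$ is trivially a rational state), with the caveat that the nontrivial direction of that equivalence is itself underwritten by the same machinery as the theorem's proof, so it is a repackaging rather than a genuinely independent argument.
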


This is because in this case, the CA Krylov basis coincides with the Hamiltonian basis, making $\hat{P}_{\lambda}$ readily diagonal with rational eigenvalues either 0 or 1.

Additionally, we have the following proposition regarding the $J_\eta$ dependence of HSF:

\begin{proposition}
\label{proposition: when ICPF > CA}
If a model has ICPF $\succ$ CA, namely, there exist certain CA Krylov subspaces that admit a finer fragmentation $\mathcal{H}_{\lambda,\delta_\lambda}=\bigoplus_{j} \mathcal{H}_{p_j^{(\lambda)},\delta_{p_j}^{(\lambda)}}$ by ICPF, then at least one of the ICPF Krylov subspaces $\mathcal{H}_{p_j^{(\lambda)},\delta_{p_j}^{(\lambda)}}$ must have a basis that depends on the coefficients $\{J_\eta\}$.
\end{proposition}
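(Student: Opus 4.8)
The plan is to argue by contradiction, leveraging the fact that a single CA Krylov subspace carries an \emph{irreducible} representation of $\mathcal{A}$. First I would record this structural input: from the bipartite decomposition in \cref{eqn: Hdecomp}, an operator of $\mathcal{A}$ acts on the summand $\lambda$ as $M_\lambda\otimes\mathds{1}_{d_\lambda}$, so on each CA Krylov subspace $\mathcal{H}_{\lambda,\delta_\lambda}\cong\mathcal{H}_\lambda^{(\mathcal{A})}$ it acts through the irrep $M_\lambda$. Consequently $\mathcal{H}_{\lambda,\delta_\lambda}$ admits \emph{no} proper nonzero subspace invariant under all the generators $\{\hat{h}_\eta\}$. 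Since the premise is ICPF $\succ$ CA, the subspace $\mathcal{H}_{\lambda,\delta_\lambda}=\bigoplus_j\mathcal{H}_{p_j^{(\lambda)},\delta_{p_j}^{(\lambda)}}$ splits into at least two ICPF Krylov subspaces, each of which is therefore a proper nonzero subspace of $\mathcal{H}_{\lambda,\delta_\lambda}$.

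Next I would suppose for contradiction that one of these pieces, call it $V$ with orthogonal projector $P_V$, has a basis independent of $\{J_\eta\}$, i.e.\ $V$ is the same fixed subspace for a family of rational coefficient vectors whose span is all of $\mathbb{R}^N$. For every such coefficient choice $V$ is an ICPF Krylov subspace and hence invariant, $\hamiltonian(\{J_\eta\})\,V\subseteq V$. The obstruction to invariance, obtained by projecting $\hamiltonian(\{J_\eta\})|_V$ onto the complement of $V$, is $(\mathds{1}-P_V)\hamiltonian(\{J_\eta\})P_V=\sum_\eta J_\eta\,(\mathds{1}-P_V)\hat{h}_\eta P_V$, which is \emph{linear} in $\{J_\eta\}$. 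Its vanishing on a spanning set of coefficient vectors forces $(\mathds{1}-P_V)\hat{h}_\eta P_V=0$ for every $\eta$, so $V$ is invariant under each $\hat{h}_\eta$ separately, and therefore under all of $\mathcal{A}$.

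This contradicts the irreducibility of $M_\lambda$ on $\mathcal{H}_{\lambda,\delta_\lambda}$, because $V$ would be a proper nonzero $\mathcal{A}$-invariant subspace. Hence no ICPF piece inside $\mathcal{H}_{\lambda,\delta_\lambda}$ can have a $\{J_\eta\}$-independent basis; in fact the argument shows that \emph{every} piece depends on $\{J_\eta\}$, which in particular yields the claimed statement that at least one does. The linearity step is the engine of the proof, and the irreducibility statement is a secondary input that follows immediately from \cref{eqn: Hdecomp}.

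The hard part will be the precise formulation of what it means for ``the basis to depend on $\{J_\eta\}$,'' and the accompanying justification that a genuinely $\{J_\eta\}$-independent ICPF subspace persists as a Krylov subspace over enough coefficient vectors to run the linearity argument. Concretely, I need to guarantee that $V$ stays invariant along a set of coefficient vectors spanning $\mathbb{R}^N$; this should follow by working on the (dense, since $\mathbb{Q}^N$ is dense in $\mathbb{R}^N$) set of rational coefficients on which the ICPF refinement pattern is unchanged, but pinning down this persistence domain—rather than invariance at isolated coefficient values—is the delicate point that the full proof must address carefully.
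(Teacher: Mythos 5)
Your proof is correct and follows essentially the same route as the paper's: assuming a $\{J_\eta\}$-independent ICPF subspace, deduce invariance under every generator $\hat{h}_\eta$, and contradict the irreducibility of $\mathcal{H}_{\lambda}^{(\mathcal{A})}$ as a representation of $\mathcal{A}$. Your linearity-plus-spanning step for the off-diagonal block $(\mathds{1}-P_V)\hamiltonian(\{J_\eta\})P_V$ is a slightly more careful rendering of the paper's shortcut, which simply sets $\hamiltonian(\{J_\eta\})=\hat{h}_\eta$, and it properly addresses the persistence-of-invariance subtlety you flag at the end.
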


Intuitively, \cref{proposition: when ICPF > CA} follows from the fact that the CA method is independent of the coefficients $\{J_{\eta}\}$. Thus, any finer fragmentation by ICPF must result from additional structures brought by the Hamiltonian coefficients $\{J_{\eta}\}$. For a detailed proof, see SM \cite{SM}.

\begin{table*}[htbp]
\centering
\caption{The HSF of four representative models, which show agreement with \cref{theorem: main theorem}, \cref{corollary1} and \cref{proposition: when ICPF > CA}.
}
\label{table: various models}
\begin{tabular}{|l||c|c|c|c|c|c|c|}
\hline
\textbf{Model} & \textbf{Hamiltonian} & \textbf{Commutant} $\mathcal{C}$ & $\mathbf{Basis}(\mathcal{Z})$ \textbf{eigenvalues} & \textbf{HSF order} & \textbf{ICPF basis} & \textbf{ICPF} & \textbf{CA} \\ 
\hline
Pair-Flip & \cref{eqn: Hamiltonian of PF} & Abelian & Rational & ICPF $=$ CA & $\{J_\eta\}$ independent & cHSF & cHSF
\\
\hline
Temperley-Lieb & \cref{eqn: Hamiltonian of TL} & Non-Abelian & Rational & ICPF $=$ CA & $\{J_\eta\}$ independent & qHSF & qHSF
\\
\hline
Quantum Breakdown & \cref{eqn: Hamiltonian of QB} & Abelian & Rational & ICPF $\succ$ CA & $\{J_\eta\}$ dependent & qHSF & cHSF
\\
\hline
Fibonacci matrix & \cref{eqn: Hamiltonian of FM} & Abelian & Irrational & ICPF $\prec$ CA & $\{J_\eta\}$ independent & cHSF & qHSF
\\
\hline
$t$-$J_z$ & See SM \cite{SM} & Abelian & Rational & ICPF $=$ CA & $\{J_\eta\}$ independent & cHSF & cHSF
\\
\hline
\end{tabular}
\end{table*}

\emph{Examples}. To verify the above results, we numerically examine the HSF in examples of the pair-flip (PF) model \cite{caha2018pairflip}, the Temperley-Lieb (TL) model \cite{PhysRevB.40.4621, Aufgebauer_2010, M.T.Batchelor_1990, Read2007ti}, the quantum breakdown (QB) model \cite{Lian:2022QBM, Chen2024qfEQBM,Hu_2024,Liu2025QBM,hu2025glass}, the Fibonacci matrix (FM) model and the $t$-$J_z$ model \cite{PhysRevB.55.6491, PhysRevLett.85.4755}. In all models, the Hamiltonian basis is the on-site product state basis. 

The HSF structures of these models are summarized in \cref{table: various models}. In the PF, TL, QB and $t$-$J_z$ models, all the center bases in $\bold{Basis}(\mathcal{Z})$ have rational eigenvalues, thus ICPF $\succeq$ CA by \cref{theorem: main theorem}. In particular, the PF, TL and $t$-$J_z$ models have ICPF $=$ CA, and all their Krylov subspace basis are independent of $\{J_\eta\}$. The QB model has strictly ICPF $\succ$ CA, and the ICPF Krylov basis are $\{J_\eta\}$ dependent, in agreement with \cref{proposition: when ICPF > CA}. The FM model provides an example where the elements in $\bold{Basis}(\mathcal{Z})$ have irrational eigenvalues, which has strictly ICPF $\prec$ CA.

In the below, we analyze the HSF of four representative models more specifically, with further details given in the SM \cite{SM}. The ICPF is calculated with random integer $\{J_\eta\}$ to avoid accidental factorizations. A useful necessary and sufficient condition for ICPF $=$ CA is, for each ICPF Krylov subspace characterized by polynomial $\phi_{p}^{(D_p)}(\mu)$, there is a CA Krylov subspace with equal dimension $D_\lambda=D_p$, which has a CA root state $|\Psi\rangle$ satisfying $\phi_{p}^{(D_p)}(\hamiltonian)\ket{\Psi} = 0$.

\emph{(i) The pair-flip (PF) model}, the Hamiltonian of which is defined in a one-dimensional (1D) lattice with $m$ basis states $|\alpha\rangle_i$ ($1\le \alpha\le m$) per site $i$, and parameters $\{g_{i}^{\alpha, \beta},\mu_i^\alpha\}$: 
\begin{equation}
\label{eqn: Hamiltonian of PF}
    H_{\text{PF}} = 
    \sum_{i=1}^{L-1} \sum_{\alpha,\beta=1}^{m} g_{i}^{\alpha, \beta} \hat{F}_{i,i+1}^{\alpha,\beta}
    +
    \sum_{i=1}^{L} \sum_{\alpha=1}^{m} 
    \mu_{i}^{\alpha}
    \hat{N}_{i}^{\alpha}\ ,
\end{equation}
where $\hat{F}_{i,i+1}^{\alpha,\beta} = \left( \rket{\alpha \alpha} \lket{\beta \beta} \right)_{i, i+1}$ are pair-flip terms on sites $i$ and $i+1$, $\hat{N}_{i}^{\alpha} = \rket{\alpha} \lket{\alpha}_i$, and $g_{i}^{\alpha, \beta}={g_{i}^{\beta, \alpha}}^*$. 
Its commutant algebra ${\cal C}$ is Abelian, thus $d_\lambda\equiv1$ in \cref{eqn: Hdecomp}. 
We numerically verified up to $L=8$ that the model has ICPF $=$ CA, giving cHSF. For example, at $L=6$, both ICPF and CA give $D_\lambda=87,47,11,1$ dimensional Krylov subspaces, the numbers of which are $1,6,24,96$. The $D_\lambda=87$ subspace has a root state $\ket{111111}$ satisfying $\phi_{p}^{(87)}(\hamiltonian)\ket{111111} = 0$ for any $\{g_{i}^{\alpha, \beta},\mu_i^\alpha\}$, where $\phi_{p}^{(87)}$ is its polynomial from ICPF. The other subspaces have similar results.

\emph{(ii) The Temperley-Lieb (TL) model}. This model is also defined in a 1D lattice with $m$ basis states $|\alpha\rangle_i$ per site $i$, which has a Hamiltonian with parameters $\{J_{i, i+1}\}$:
\begin{equation}
\label{eqn: Hamiltonian of TL}
H_{\text{TL}} =
\sum_{i=1}^{L-1} J_{i, i+1}
\sum_{\alpha, \beta=1}^{m} 
\left( \ket{\alpha \alpha} \lket{\beta \beta} \right)_{i, i+1}\ .
\end{equation}
This model has a non-Abelian commutant algebra $\mathcal{C}$ and thus $d_\lambda$ not always $1$. We verified up to $L=8$ that the model has ICPF $=$ CA, giving qHSF. For $L=6$, both ICPF and CA give Krylov subspaces with $(D_\lambda,d_\lambda)=(9,8),(5,55),(5,1)$ and $(1,377)$, and their basis coincide. 
For example, the $(D_\lambda, d_\lambda) = (5, 1)$ Krylov subspace has a rational root state $\ket{\Psi}=\ket{\psi_\text{dimer}}_{1,2}\otimes \ket{\psi_\text{dimer}}_{3,4} \otimes \ket{\psi_\text{dimer}}_{5,6}$, with $\ket{\psi_\text{dimer}}_{i,i+1}=\sum_{\alpha}\ket{\alpha\alpha}_{i,i+1}$. It satisfies $\phi_{p}^{(5)}(\hamiltonian)\ket{\Psi}=0$ for any $\{J_{i,i+1}\}$, where $\phi_{p}^{(5)}$ is its degree-5 polynomial from ICPF.

\emph{(iii) The quantum breakdown (QB) model,} which is particle number non-conserving \cite{Lian:2022QBM, Chen2024qfEQBM,Hu_2024,Liu2025QBM,hu2025glass}. We consider a quantum breakdown model with two hardcore boson modes $b_{i,\nu}$ ($\nu=1,2$) per site $i$, and parameters $\{J_i^\nu,\mu_i\}$:
\begin{equation}
\label{eqn: Hamiltonian of QB}
    H_{\text{QB}} =  
    \sum_{i=1}^{L-1} \sum_{\nu=1}^2 J_{i}^{\nu} (b_{i+1,1}^{\dagger} b_{i+1,2}^{\dagger} b_{i,\nu} + \text{h.c.}) +\sum_{i=1}^{L} \mu_{i}\hat{n}_i,
\end{equation}
where $\hat{n}_i=b_{i,1}^{\dagger} b_{i,1}+b_{i,2}^{\dagger} b_{i,2}$ is the number operator. This model has strictly ICPF $\succ$ CA, which exhibits qHSF (cHSF) under the ICPF (CA) method. In the SM \cite{SM}, we explicitly show that at $L=3$, the CA Krylov subspaces of the QB model spanned by product state basis can be further fragmented by ICPF into smaller Krylov subspaces spanned by an entangled state basis, thereby exhibiting qHSF. The ICPF Krylov subspace bases depend on the coefficients $\{J_{i}^{\nu}\}$, in agreement with \cref{proposition: when ICPF > CA}.

\emph{(iv) The Fibonacci matrix (FM) model}, which has a Hamiltonian defined on a 1D lattice with 2 basis states $(|0\rangle_i,|1\rangle_i)$ per site $i$, with parameters $\{J_i\}$:
\begin{equation}
\label{eqn: Hamiltonian of FM}
H_{\text{FM}} 
= \sum_{i=1}^{L}
J_i \hat{F}_i\ ,\ \ 
\hat{F}_i= |0\rangle\langle 0|_i+|0\rangle\langle 1|_i+|1\rangle\langle 0|_i\ .
\end{equation}
The $n$-th Fibonacci number is efficiently generated as the $|0\rangle\langle 0|_i$ entry of $\hat{F}_i^{n-1}$. This model has $\mathcal{A}=\mathcal{C}=\mathcal{Z}$ generated by all $\hat{F}_i$. Thus, $\bold{Basis}(\mathcal{Z})=\{\hat{F}_i\}$ has irrational eigenvalues $\frac{1\pm\sqrt{5}}{2}$, and \cref{theorem: main theorem} does not apply. The CA method gives a qHSF into $2^L$ one-dimensional Krylov subspaces, namely the eigenstates. In contrast, the ICPF method yields $2^{L-1}$ two-dimensional Krylov subspaces as one can verify. So the model has ICPF $\prec$ CA.

\emph{Discussion.} We have proved \cref{theorem: main theorem} for Hamiltonians consisting of rational operators in \cref{eqn: Hamiltonian in J}, which ensures ICPF $\succeq$ CA under the condition $\bold{Basis}(\mathcal{Z})$ have all eigenvalues rational. Interestingly, this condition is satisfied by most known models, except for relatively trivial models such as the FM model constructed here. This bridges the ICPF and CA methods for identifying HSF, which originate from vastly distinct ideas. Additionally, \cref{proposition: when ICPF > CA} reveals the parameter dependence of the Krylov bases as a reason for ICPF and CA to give different HSFs. This allows one to apply the numerically more efficient ICPF method to identify the CA Krylov subspaces. A potential future direction is to extend the framework to certain classes of irrational operators, e.g. operators with algebraic number matrix representations, which would include more models such as spin models of spin one or higher. Moreover, it would be interesting and useful to further explore deeper connections between the two methods in quantum entanglement, dynamical localization and preventing thermalization, which may eventually lead to a unified definition of HSF.

\begin{acknowledgments}
\emph{Acknowledgments}. B.~C. is particularly grateful to Abhinav Prem and Nicolas Regnault for helping shape the initial ideas, and thanks Yumin Hu, Tian-Hua Yang, and Zihan Zhou for helpful discussions. This work is supported by the National Science Foundation under award DMR-2141966. 
\end{acknowledgments}

\bibliography{mybib.bib}

\clearpage
\onecolumngrid

\setcounter{equation}{0}
\setcounter{figure}{0}
\setcounter{table}{0}
\setcounter{page}{1}
\makeatletter
\renewcommand{\theequation}{S\arabic{equation}}
\renewcommand{\thefigure}{S\arabic{figure}}
\renewcommand{\thetable}{S\arabic{table}}

\begin{center}
{\bf \large Supplemental Material}
\end{center}

\section{Basic properties of von Neumann algebras}

We introduce some definitions and state several theorems without proof; their proofs can be found in existing references, such as \cite{Harlow:2016vwg}.

\begin{definition}
    Let $\mathcal{H}$ be a finite-dimensional Hilbert space, and let $\mathcal{L}(\mathcal{H})$ be the set of linear operators that act on $\mathcal{H}$. A von Neumann algebra on $\mathcal{H}$ is a set of operators $\mathcal{A} \subseteq \mathcal{L}(\mathcal{H})$ that satisfies:
    \begin{itemize}
        \item $\forall \lambda \in \mathbb{C}$, $\lambda \mathds{1} \in \mathcal{A}$.
        \item $\forall x \in \mathcal{A}$, $x^{\dagger} \in \mathcal{A}$.
        \item $\forall x, y \in \mathcal{A}$, $x+y \in \mathcal{A}$, $xy \in \mathcal{A}$.
    \end{itemize}
\end{definition}

\begin{definition}
  Given a von Neumann algebra ${\mathcal{A}}$, its commutant algebra $\mathcal{C}$ is defined as the set of operators on $\mathcal{H}$ that commutes with every element of ${\mathcal{A}}$. The center of ${\mathcal{A}}$ is defined as ${\mathcal Z} = \mathcal{A} \cap \mathcal{C}$. It is straightforward to verify that both $\mathcal{C}$ and ${\mathcal Z}$ are themselves von Neumann algebras.
\end{definition}

\begin{theorem}[The bi-commutant theorem]
  Let $\mathcal{A}$ be a von Neumann algebra, and $\mathcal{C}$ be its commutant algebra. The bi-commutant theorem states that the commutant algebra of $\mathcal{C}$ is equal to $\mathcal{A}$ itself.
\end{theorem}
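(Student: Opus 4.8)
The plan is to prove the two inclusions separately. Writing $X'$ for the commutant of a set of operators $X$, the hypothesis gives $\mathcal{C}=\mathcal{A}'$, and the claim is that $\mathcal{A}'':=(\mathcal{A}')'=\mathcal{A}$. One inclusion is immediate from the definitions: every $a\in\mathcal{A}$ commutes with every element of $\mathcal{C}=\mathcal{A}'$, so $a\in(\mathcal{A}')'=\mathcal{A}''$, giving $\mathcal{A}\subseteq\mathcal{A}''$. All the content lies in the reverse inclusion $\mathcal{A}''\subseteq\mathcal{A}$, which I would establish by von Neumann's density argument specialized to finite dimensions, where every subspace is automatically closed and no weak-operator-topology subtleties arise.

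First I would prove a single-vector version of the statement: given $x\in\mathcal{A}''$ and any $\xi\in\mathcal{H}$, there exists $a\in\mathcal{A}$ with $x\xi=a\xi$. The key observation is that the cyclic subspace $\mathcal{A}\xi$ is $\mathcal{A}$-invariant, and because $\mathcal{A}$ is $*$-closed, its orthogonal complement is $\mathcal{A}$-invariant as well; hence the orthogonal projection $P$ onto $\mathcal{A}\xi$ commutes with every element of $\mathcal{A}$, i.e.\ $P\in\mathcal{A}'=\mathcal{C}$. Since $x\in\mathcal{A}''=\mathcal{C}'$ commutes with $P$, and since $P\xi=\xi$ (using $\mathds{1}\in\mathcal{A}$, so that $\xi=\mathds{1}\xi\in\mathcal{A}\xi$), I obtain $x\xi=xP\xi=Px\xi\in\mathrm{range}(P)=\mathcal{A}\xi$, which furnishes the desired $a$.

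The main obstacle is that this $a$ depends on $\xi$, so the single-vector result does not yet give $x=a$ as operators. The crux of the proof is the amplification trick that removes this dependence. I would pass to $\mathcal{H}^{\oplus n}$ with $n=\dim\mathcal{H}$ and the diagonal representation $\pi(a)=a\oplus\cdots\oplus a$. A direct computation shows that $\pi(\mathcal{A})'$ consists of all $n\times n$ block operators whose entries lie in $\mathcal{A}'$, and taking the commutant once more yields $\pi(\mathcal{A})''=\pi(\mathcal{A}'')$; in particular $\pi(x)\in\pi(\mathcal{A})''$. Applying the single-vector result to $\pi(\mathcal{A})$ acting on the single vector $\Xi=(e_1,\dots,e_n)$ assembled from a basis $\{e_i\}$ of $\mathcal{H}$ produces an $a\in\mathcal{A}$ with $\pi(x)\Xi=\pi(a)\Xi$, that is $x e_i=a e_i$ for every basis vector, forcing $x=a\in\mathcal{A}$.

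I would close by emphasizing where the two algebra axioms enter: $*$-closedness of $\mathcal{A}$ is what makes the orthogonal complement of a cyclic subspace invariant, hence $P\in\mathcal{A}'$, and unitality $\mathds{1}\in\mathcal{A}$ is what places $\xi$ inside its own cyclic subspace so that $P\xi=\xi$. Both are guaranteed by the von Neumann algebra definition stated above, and in the finite-dimensional setting no closure or continuity argument is required, so the proof is purely algebraic.
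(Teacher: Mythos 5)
Your proof is correct. Note first that the paper itself offers no proof of this statement: it is listed among background facts ``stated without proof,'' with the reader referred to standard references such as Harlow's review, so there is no in-paper argument to compare against. What you have written is the standard finite-dimensional double-commutant argument, and it is complete: the inclusion $\mathcal{A}\subseteq\mathcal{A}''$ is definitional; the single-vector lemma is sound because $\mathcal{A}\xi$ is a genuine linear subspace (the paper's axioms $\lambda\mathds{1}\in\mathcal{A}$ plus multiplicative closure give closure under scalar multiples), $*$-closure makes $(\mathcal{A}\xi)^{\perp}$ invariant so that $P\in\mathcal{A}'$, and unitality gives $P\xi=\xi$; and the amplification to $\mathcal{H}^{\oplus n}$ with the vector $\Xi=(e_1,\dots,e_n)$ correctly removes the $\xi$-dependence. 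The one step you compress --- ``taking the commutant once more yields $\pi(\mathcal{A})''=\pi(\mathcal{A}'')$'' --- deserves its one-line verification: commutation with the matrix units $E_{kl}\otimes\mathds{1}$ (which lie in $\pi(\mathcal{A})'$ since their entries are in $\mathcal{A}'$) forces an element of $\pi(\mathcal{A})''$ to be block-diagonal with equal blocks $s$, and commutation with $\operatorname{diag}(c,\dots,c)$ for $c\in\mathcal{A}'$ then forces $s\in\mathcal{A}''$; in fact for your purposes the easier inclusion $\pi(\mathcal{A}'')\subseteq\pi(\mathcal{A})''$, i.e.\ $\pi(x)\in\pi(\mathcal{A})''$, suffices and follows directly since $\pi(x)$ is diagonal with blocks $x\in\mathcal{A}''$ commuting entrywise with every block matrix over $\mathcal{A}'$. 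You are also right that finite dimensionality is exactly what lets you skip the weak-closure hypothesis that the infinite-dimensional bicommutant theorem requires, which is consistent with the paper's purely finite-dimensional setting.
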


\begin{theorem}
\label{theoremSM: block-decomposition A C}
    Let $\mathcal{A}$ be a von Neumann algebra acting on a Hilbert space $\mathcal{H}$, and let $\mathcal{C}$ denote its commutant. Then $\mathcal{H}$ admits a block decomposition of the form
    \begin{equation}
        \mathcal{H} = \bigoplus_\lambda \left( \mathcal{H}_{\lambda}^{(\mathcal{A})} \otimes \mathcal{H}_{\lambda}^{(\mathcal{C})} \right)
    \end{equation}
    where $\mathcal{H}_{\lambda}^{(\mathcal{A})}$ and $\mathcal{H}_{\lambda}^{(\mathcal{C})}$ are irreducible representations of ${\mathcal{A}}$ and $\mathcal{C}$ respectively. With respect to this decomposition, both $\mathcal{A}$ and $\mathcal{C}$ take block-diagonal forms,
    \begin{equation}
    \label{eqnSM: decom H into HA and HC}
        \mathcal{A} = \bigoplus_\lambda \left( \mathcal{L}(\mathcal{H}_{\mathcal{A}}) \otimes \mathds{1} \right)
        \;, \;
        \mathcal{C} = \bigoplus_\lambda \left( \mathds{1} \otimes \mathcal{L}(\mathcal{H}_{\mathcal{C}}) \right).
    \end{equation}
    The blocks labeled by $\lambda$ are refered to as summands.
\end{theorem}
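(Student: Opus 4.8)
The plan is to obtain the decomposition from the Artin--Wedderburn structure theorem for finite-dimensional semisimple algebras, using the $*$-closure of $\mathcal{A}$ to guarantee semisimplicity and Schur's lemma to pin down the commutant. First I would show that the finite-dimensional von Neumann algebra $\mathcal{A}$ is semisimple, i.e. that its Jacobson radical vanishes. The key input is $*$-closure: if $x$ lies in the radical, then so does the positive self-adjoint element $x^{\dagger} x$, which, belonging to the radical of a finite-dimensional algebra, must be nilpotent; but a positive nilpotent operator is zero, forcing $x=0$. Hence the radical is trivial and $\mathcal{A}$ is semisimple.

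Next, by the Artin--Wedderburn theorem, $\mathcal{A}$ is isomorphic as a $\mathbb{C}$-algebra to a direct sum of full matrix algebras, $\mathcal{A} \cong \bigoplus_\lambda M_{D_\lambda}(\mathbb{C})$. The identity elements of the simple factors are mutually orthogonal central idempotents $P_\lambda \in \mathcal{Z}=\mathcal{A}\cap\mathcal{C}$ summing to $\mathds{1}$, so they furnish an orthogonal decomposition $\mathcal{H} = \bigoplus_\lambda \mathcal{H}_\lambda$ with $\mathcal{H}_\lambda = P_\lambda\mathcal{H}$, each summand invariant under both $\mathcal{A}$ and $\mathcal{C}$. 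On $\mathcal{H}_\lambda$ only the factor $M_{D_\lambda}(\mathbb{C})$ acts nontrivially, and since every representation of a full matrix algebra is a direct sum of copies of its unique $D_\lambda$-dimensional irreducible representation, I would identify $\mathcal{H}_\lambda \cong \mathcal{H}_{\lambda}^{(\mathcal{A})}\otimes\mathcal{H}_{\lambda}^{(\mathcal{C})}$, where $\mathcal{H}_{\lambda}^{(\mathcal{A})}\cong\mathbb{C}^{D_\lambda}$ carries the irrep and $\mathcal{H}_{\lambda}^{(\mathcal{C})}\cong\mathbb{C}^{d_\lambda}$ is the multiplicity space, with $\mathcal{A}$ acting as $\mathcal{L}(\mathcal{H}_{\lambda}^{(\mathcal{A})})\otimes\mathds{1}$.

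Finally I would compute the commutant. Since the $P_\lambda$ are central, any element of $\mathcal{C}$ commutes with each $P_\lambda$ and is therefore block-diagonal in $\lambda$; moreover distinct simple factors carry inequivalent irreps, so by Schur's lemma no operator in $\mathcal{C}$ can connect different blocks. Within a single block, Schur's lemma applied to the irreducible action on $\mathcal{H}_{\lambda}^{(\mathcal{A})}$ shows that any operator commuting with $\mathcal{L}(\mathcal{H}_{\lambda}^{(\mathcal{A})})\otimes\mathds{1}$ must take the form $\mathds{1}\otimes N$ with $N\in\mathcal{L}(\mathcal{H}_{\lambda}^{(\mathcal{C})})$ arbitrary, yielding $\mathcal{C} = \bigoplus_\lambda \mathds{1}\otimes\mathcal{L}(\mathcal{H}_{\lambda}^{(\mathcal{C})})$ as claimed; the symmetric computation of the commutant of $\mathcal{C}$ then recovers $\mathcal{A}$, consistent with the bi-commutant theorem. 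The main obstacle is precisely this double-centralizer step: one must verify both that $\mathcal{C}$ is no larger than the stated multiplicity-space algebra (the Schur's lemma upper bound) and no smaller (every $\mathds{1}\otimes N$ genuinely commutes with all of $\mathcal{A}$), while checking that the tensor factorization on each $\mathcal{H}_\lambda$ is simultaneously compatible with the actions of $\mathcal{A}$ and $\mathcal{C}$ across all summands.
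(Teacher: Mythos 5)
Your proposal is correct, but note that the paper itself does not prove this statement at all: the Supplemental Material explicitly states this theorem (together with the bi-commutant theorem) without proof, deferring to standard references such as \cite{Harlow:2016vwg}. So there is no in-paper argument to compare against; what you have written is essentially the standard structure-theory proof, and it is sound. Your route---semisimplicity of the finite-dimensional $*$-algebra via the radical argument ($x$ in the radical $\Rightarrow x^{\dagger}x$ in the radical, nilpotent, positive, hence zero), Artin--Wedderburn giving $\mathcal{A}\cong\bigoplus_\lambda M_{D_\lambda}(\mathbb{C})$, central idempotents cutting $\mathcal{H}$ into summands, multiplicity spaces from the uniqueness of the irrep of a full matrix algebra, and Schur's lemma for the commutant---is the textbook Wedderburn approach; the reference \cite{Harlow:2016vwg} instead builds the decomposition more constructively, by hand from projections and irreducible blocks, which is more elementary but longer, while your version buys brevity at the cost of invoking the structure theorem. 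Two small points deserve tightening. First, for the summands to give an \emph{orthogonal} decomposition you need the minimal central idempotents $P_\lambda$ to be self-adjoint projections; this is true but requires a sentence: the center $\mathcal{Z}=\mathcal{A}\cap\mathcal{C}$ is a commutative $*$-closed algebra, so every element is normal and the family is simultaneously unitarily diagonalizable, whence its minimal idempotents are orthogonal projections (it is not automatic that an algebraic central idempotent is self-adjoint). Second, the identification $\mathcal{H}_\lambda\cong\mathcal{H}_{\lambda}^{(\mathcal{A})}\otimes\mathcal{H}_{\lambda}^{(\mathcal{C})}$ should be implemented by a \emph{unitary}, not merely an algebra intertwiner, since the theorem is used downstream with orthogonality of the CA Krylov basis; this follows by choosing the isotypic copies and the intertwiners isometrically, which is standard for $*$-representations but worth stating. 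Also, your remark that Schur's lemma is needed to prevent $\mathcal{C}$ from connecting different blocks is redundant: commutation with the central projections $P_\lambda\in\mathcal{A}$ already forces block-diagonality.
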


\section{Formal proof of ICPF $\succeq$ CA}
\label{sectionSM: formal proof of ICPF <= CA}

\subsection{ICPF $\succeq$ CA}

In the main text, we briefly sketched the proof of ICPF $\succeq$ CA. Here, we present the complete proof.
We first introduce some terminologies.
Operators whose matrix representations contain only rational number entries in the Hamiltonian basis are referred to as \emph{rational operators}. States (not necessarily normalized) that admit integer coefficient decompositions in the Hamiltonian basis are referred to as \emph{rational states}; when normalized, the ratios between their components are rational.

\begin{theorem}
	\label{theoremSM: Z is rational under u}
	In the Hamiltonian basis $\{\ket{u_i}\}$, where all $\hat{h}_{\eta}$ are rational operators, every elements of $\bold{Basis}(\mathcal{Z})=\{ \hat{z}_{k} \}$ can be expressed as rational operators.
\end{theorem}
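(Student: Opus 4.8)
The plan is to reduce the statement to a single fact about solution spaces of linear systems with coefficients in $\mathbb{Q}(i)$. Identifying each operator on the $n$-dimensional Hilbert space $\mathcal{H}$ ($n=\dim\mathcal{H}$) with its vector of matrix elements in $\mathbb{C}^{n^2}$, a rational operator is precisely a vector in $\mathbb{Q}(i)^{n^2}$. The one linear-algebra input I will use is the following descent fact: if $M$ is a matrix with entries in $\mathbb{Q}(i)$, then the $\mathbb{C}$-solution space of $Mv=0$ admits a basis all of whose entries lie in $\mathbb{Q}(i)$. This holds because Gaussian elimination carried out over the field $\mathbb{Q}(i)$ never leaves $\mathbb{Q}(i)$, and the rank of $M$ over $\mathbb{Q}(i)$ equals its rank over $\mathbb{C}$ (rank is determinantal and is unchanged under field extension). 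Consequently, any $\mathbb{C}$-subspace that can be written as the kernel of a $\mathbb{Q}(i)$-matrix automatically possesses a rational basis, and it suffices to exhibit $\mathcal{Z}$ as such a kernel.

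First I would produce a rational basis of the commutant $\mathcal{C}$. The defining conditions $[\hat{c},\hat{h}_\eta]=0$ for all $\eta$ are linear in the matrix elements of $\hat{c}$, and since every $\hat{h}_\eta$ is a rational operator, the associated coefficient matrix (assembled from the entries of the $\hat{h}_\eta$) has entries in $\mathbb{Q}(i)$. Hence $\mathcal{C}$ is the kernel of a $\mathbb{Q}(i)$-matrix, and by the descent fact it carries a basis $\{\hat{c}_j\}$ of rational operators. (Closure of this kernel under Hermitian conjugation and products, needed for it to be a von Neumann algebra, follows because the $\hat{h}_\eta$ are Hermitian and is already assumed in the setup; it plays no role in the rationality claim.)

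Next I would identify $\mathcal{Z}=\mathcal{A}\cap\mathcal{C}$ as a single rational kernel. By the bi-commutant theorem stated above, $\mathcal{A}=\mathcal{C}'$, so an operator lies in $\mathcal{A}$ if and only if it commutes with the generating set $\{\hat{c}_j\}$ of $\mathcal{C}$ (commuting with a spanning set is equivalent to commuting with the whole algebra). Therefore $\hat{z}\in\mathcal{Z}$ exactly when $[\hat{z},\hat{h}_\eta]=0$ for all $\eta$ (membership in $\mathcal{C}$) and $[\hat{z},\hat{c}_j]=0$ for all $j$ (membership in $\mathcal{A}$). Because the $\hat{c}_j$ are now themselves rational, all of these are $\mathbb{Q}(i)$-linear constraints, so $\mathcal{Z}$ is the kernel of the single stacked $\mathbb{Q}(i)$-matrix encoding commutation against $\{\hat{h}_\eta\}\cup\{\hat{c}_j\}$. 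The descent fact then furnishes a rational vector-space basis of $\mathcal{Z}$; extracting from it a minimal algebra-generating subset keeps every generator rational, which is exactly the asserted $\bold{Basis}(\mathcal{Z})=\{\hat{z}_k\}$.

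The only genuine content is the descent fact of the first step, and I expect the point requiring the most care to be the bootstrap between the second and third steps: one must first secure a rational generating set $\{\hat{c}_j\}$ of $\mathcal{C}$ before $\mathcal{A}=\mathcal{C}'$ can be presented as a rational kernel, since $\mathcal{A}$ is a priori described only through complex linear combinations of products of the $\hat{h}_\eta$. Re-presenting $\mathcal{A}$ through the rational $\{\hat{c}_j\}$ via the bi-commutant theorem is what collapses the intersection $\mathcal{A}\cap\mathcal{C}$ into one rational linear system, and thereby sidesteps any separate argument that the intersection of two rationally-defined subspaces is itself rationally defined.
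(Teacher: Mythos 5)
Your proposal is correct, but it reaches the result by a genuinely different route than the paper. The paper's proof works \emph{inside} $\mathcal{A}$: it first constructs a rational vector-space spanning set $\bold{Basis}(\mathcal{A})=\{\hat{a}_j\}$ by iteratively multiplying the generators $\hat{h}_\eta$ and discarding linearly dependent products (rational arithmetic throughout), and then obtains $\mathcal{Z}$ by solving the rational linear system $[\sum_j w_j\hat{a}_j,\hat{h}_\eta]=0$ for the coefficients $w_j$, asserting (without elaboration) that a rational system admits rational solutions. You instead never build a basis of $\mathcal{A}$: you present $\mathcal{C}$ as the kernel of a $\mathbb{Q}(i)$-matrix, extract a rational basis $\{\hat{c}_j\}$ via the field-descent lemma, and then use the bi-commutant theorem $\mathcal{A}=\mathcal{C}'$ to characterize $\mathcal{Z}=\mathcal{A}\cap\mathcal{C}$ as the kernel of a single stacked rational system in the full operator space $\mathcal{L}(\mathcal{H})$, applying descent once more. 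Both arguments ultimately rest on the same linear-algebra fact—that the solution space of a linear system over $\mathbb{Q}(i)$ has a $\mathbb{Q}(i)$-basis, by Gauss elimination plus invariance of rank under field extension—but you make this lemma explicit, whereas the paper leaves it implicit in the phrase ``they admit rational solutions.'' What each approach buys: the paper's construction is algorithmic and mirrors the actual numerical procedure, exhibits each $\hat{z}_k$ manifestly as a rational combination of words in the $\hat{h}_\eta$ (so membership in $\mathcal{A}$ is automatic), and does not need the bi-commutant theorem for this step; your version avoids the potentially expensive iterative generation of $\bold{Basis}(\mathcal{A})$ and is conceptually cleaner, at the cost of invoking $\mathcal{A}=\mathcal{C}''$ (valid here since $\mathcal{A}$ is unital and $*$-closed, as the $\hat{h}_\eta$ are Hermitian—a hypothesis you correctly note) and of requiring the observation that commuting with the linear spanning set $\{\hat{c}_j\}$ suffices for membership in $\mathcal{C}'$, which your kernel construction indeed supplies. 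Your final pruning step to an inclusion-minimal rational generating set matches the paper's own caveat that only the existence of \emph{at least one} rational $\bold{Basis}(\mathcal{Z})$ is being asserted.
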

\begin{proof}
	We start by constructing a linear-independent basis $\bold{Basis}(\mathcal{A})$ through an iterative procedure:
	\begin{enumerate}
		\item Initialize a set $S$ with the identity $S = \{ \mathds{1} \}$. 
		\item Expand $S$ by multiplying each of its elements with every $\hat{h}_\eta$, adding the products to $S$:
		      \begin{equation}
			      S \leftarrow S \, \cup \left\{ \hat{s} \cdot \hat{h}_{\eta} \mid \hat{s}\in S, \; \forall \hat{h}_\eta \right\}
		      \end{equation}
		\item Remove any linearly dependent element from $S$.
		\item Repeat step 2 and step 3 until the set $S$ no longer changes---that is, no new linearly independent elements are added. The resulting set $S$ is taken as $\bold{Basis}(\mathcal{A})$.
	\end{enumerate}
	Since this procedure involves only rational arithmetic, all elements in $\bold{Basis}(\mathcal{A})=\{ \hat{a}_{k} \}$ are rational operators. To construct $\bold{Basis}(\mathcal{Z})$, we solve for the coefficients $w_{k}$ satisfying the commuting condition:
	\begin{equation}\label{seq:basisZ}
		\forall \hat{h}_{\eta}, \quad \left[ \sum\nolimits_{j} w_{j} \hat{a}_{j}, \hat{h}_{\eta} \right] = 0
	\end{equation}
	Each linearly independent solution $w_j=w_{j}^{(k)}$ (the $k$-th solution) corresponds to an element $\hat{z}_{k}=\sum_jw_{j}^{(k)}\hat{a}_j$ in $\bold{Basis}(\mathcal{Z})=\{\hat{z}_{k}\}$. We denote the number of linearly independent solutions as $Z=|\bold{Basis}(\mathcal{Z})|$. Since \cref{seq:basisZ} are a set of equations of $\{w_{j}\}$ with rational coefficients, they admit rational solutions, and thus the resulting basis operators $\{\hat{z}_{k}\}$ can all be taken to be rational operators. Note that $\bold{Basis}(\mathcal{Z})$ is not uniquely determined; however, we are only asserting the existence of \textit{at least one} such rational basis.
\end{proof}

Since the operators $\hat{z}_{k}$ commute with one another, they can be simultaneously diagonalized. The full Hilbert space then decomposes into the simultaneous eigenspaces $V_\lambda$ of $\{z_{k}\}$:
\begin{equation}
\label{eqnSM: decom of H into V lambda}
    {\cal H} = \bigoplus_{\lambda} V_{\lambda} \;, \quad  \forall \ket{v} \in V_{\lambda}, \quad \hat{z}_{k}\ket{v} = c_\lambda(\hat{z}_k) \ket{v}
,\end{equation}
where $c_\lambda(\hat{z}_k)$ are the eigenvalue of $\hat{z}_k$ on eigenspace $V_\lambda$. The label $\lambda$ indexes all distinct joint eigenspaces, such that for any $\lambda\neq \lambda^\prime$, there exists at least one $\hat{z}_k$ such that $c_\lambda(\hat{z}_k)\neq c_{\lambda^\prime}(\hat{z}_k)$. According to \cref{theoremSM: block-decomposition A C}, both $\mathcal{A}$ and $\mathcal{C}$ admit block-diagonal representations, where the blocks (referred to as summands) correspond to the subspaces  $V_\lambda=\mathcal{H}^{(\mathcal{A})}_\lambda \otimes \mathcal{H}^{(\mathcal{C})}_\lambda$. The projections onto these subspaces are denoted by $\hat{P}_\lambda$. These projections can be constructed as a product of linear combinations of $\hat{z}_k$:
\begin{equation}
\label{eqnSM: explicit expression of P lambda k}
    \projection_{\lambda}
    =
    \prod_{k}
    \left(
    \prod_{\lambda^{\prime}\in \Lambda(\lambda,k)}
    \frac
    {\hat{z}_k - c_{\lambda^{\prime}}(\hat{z}_k)}
    {c_{\lambda}(\hat{z}_k) - c_{\lambda^{\prime}}(\hat{z}_k)}
    \right)
,\end{equation}
where we have defined $\Lambda(\lambda,k)=\{\lambda'\ |\  c_{\lambda^{\prime}}(\hat{z}_k) \neq c_{\lambda}(\hat{z}_k)\}$ to be the set of summands $\lambda'$ with $\hat{z}_k$ eigenvalue different from that of $\lambda$.

\begin{theorem}
     The projections $\hat{\mathcal{P}}_{\lambda}$ are rational operators if and only if all the eigenvalues of each element in $\bold{Basis}(\mathcal{Z})$ are rational.
\end{theorem}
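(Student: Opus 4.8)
The plan is to treat the two implications of this \emph{if and only if} separately. The backward direction (rational eigenvalues $\Rightarrow$ rational projections) I expect to follow almost immediately from the explicit formula \cref{eqnSM: explicit expression of P lambda k} already established, whereas the forward direction will require one short additional observation.

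For the backward direction, I would argue directly from \cref{eqnSM: explicit expression of P lambda k}: each factor there has the form $(\hat{z}_k - c_{\lambda'}(\hat{z}_k))/(c_\lambda(\hat{z}_k) - c_{\lambda'}(\hat{z}_k))$, whose denominator is nonzero by the definition of $\Lambda(\lambda,k)$. If all eigenvalues lie in $\mathbb{Q}(i)$, each factor is a polynomial in the rational operator $\hat{z}_k$ with $\mathbb{Q}(i)$ coefficients, hence a rational operator; since products of rational operators are again rational, $\hat{P}_\lambda$ is rational. I regard this direction as routine and would keep it brief.

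For the forward direction I would start from the eigenvalue identity $\hat{z}_k \hat{P}_\lambda = c_\lambda(\hat{z}_k)\,\hat{P}_\lambda$, which holds because $\hat{P}_\lambda$ projects onto the joint eigenspace $V_\lambda$ on which $\hat{z}_k$ acts as the scalar $c_\lambda(\hat{z}_k)$ (\cref{eqnSM: decom of H into V lambda}). Assuming every $\hat{P}_\lambda$ is rational, and recalling that every $\hat{z}_k$ is rational by \cref{theoremSM: Z is rational under u}, the left-hand side is a product of rational operators and is therefore itself rational. Since $V_\lambda$ is a nontrivial summand, $\hat{P}_\lambda \neq 0$ and hence has at least one nonzero entry $(\hat{P}_\lambda)_{ij} \in \mathbb{Q}(i)$ in the Hamiltonian basis; comparing the $(i,j)$ entry on the two sides of the identity expresses $c_\lambda(\hat{z}_k) = (\hat{z}_k \hat{P}_\lambda)_{ij}/(\hat{P}_\lambda)_{ij}$ as a ratio of two elements of $\mathbb{Q}(i)$, hence as an element of $\mathbb{Q}(i)$.

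The main obstacle, such as it is, lies entirely in the forward direction: the key step is to recognize that the eigenvalue can be isolated as a ratio of matrix elements, and that a nonzero rational projection supplies the rational denominator needed to divide by, using that $\mathbb{Q}(i)$ is a field closed under division. I would take care to confirm that each projection is genuinely nonzero—guaranteed because each $\lambda$ labels a distinct nonempty joint eigenspace in \cref{eqnSM: decom of H into V lambda}—so that such a nonzero entry always exists and the ratio is well defined for every $k$ and $\lambda$.
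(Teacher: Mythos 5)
Your proposal is correct and follows essentially the same route as the paper's proof: the backward direction invokes the explicit product formula for $\hat{\mathcal{P}}_\lambda$ in terms of the rational $\hat{z}_k$, and the forward direction rests on the same identity $\hat{z}_k \hat{\mathcal{P}}_\lambda = c_\lambda(\hat{z}_k)\,\hat{\mathcal{P}}_\lambda$, which the paper runs as a contradiction while you run it directly. Your version is marginally more careful in making explicit that $\hat{\mathcal{P}}_\lambda \neq 0$ supplies a nonzero entry of $\mathbb{Q}(i)$ to divide by, a point the paper leaves implicit.
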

\begin{proof}
    The eigenvalues of $\hat{z}_k\in\bold{Basis}(\mathcal{Z})$ are given by $c_{\lambda}(\hat{z}_k)$. If all these coefficients are rational, and if, as established in \cref{theoremSM: Z is rational under u}, the operators $\hat{z}_k$ are rational operators, then the projections $\hat{\mathcal{P}}_{\lambda}$ are rational operators.

    Conversely, suppose that for a particular $\lambda$ and $k$, the eigenvalue $c_{\lambda}(\hat{z}_{k})$ is irrational. Then, from \eqref{eqnSM: decom of H into V lambda}, we have:
    \begin{equation}
        \hat{z}_{k} \projection_{\lambda}
        = c_{\lambda}(\hat{z}_{k}) \projection_{\lambda}
    \end{equation}
    If the projection $\hat{\mathcal{P}}_{\lambda}$ is rational, the left-hand side of the equation would only contain rational numbers. However, the right-hand side involves the irrational number $c_{\lambda}(\hat{z}_{k})$, leading to a contradiction. Therefore, $\hat{\mathcal{P}}_{\lambda}$ must contain irrational entries under the Hamiltonian basis.
\end{proof}

Under the Hamiltonian basis, both the Hamiltonian $H(J_\eta)$ and the projections $\hat{\mathcal{P}}_{\lambda}$ have rational matrix representations. As a result, the characteristic polynomial $\Phi_{\lambda}(\mu) = \det_\lambda(\mu \mathds{1} - \hat{\mathcal{P}}_{\lambda} H(J_{\eta}) \hat{\mathcal{P}}_{\lambda})$ has rational coefficients. In other words, the Hamiltonian, when projected onto each summand (labeled by $\lambda$), has a rational characteristic polynomial. Consequently, the ICPF approach can capture the invariant subspaces with different $\lambda$.

In a summand labeled by $\lambda$, the (monic) characteristic polynomial $\Phi_{\lambda}(\mu)$ can be expressed as a product of sub-polynomial raised to the power $d_\lambda$: $\Phi_{\lambda}(\mu) = \left[ \phi_{\lambda}^{(D_\lambda)}(\mu) \right]^{d_{\lambda}}$, where the sub-polynomial $\phi_{\lambda}^{(D_\lambda)}(\mu)$ corresponds to a degenerate subspace of dimension $D_\lambda$. What remains to be proven is whether $\phi_{\lambda}^{(D_\lambda)}(\mu)$ has rational coefficients.

\begin{theorem}
	Let $f(x)$ be a monic polynomial, meaning its leading coefficient is 1. If $[f(x)]^{n}$ has rational coefficients for some natural number $n$, then $f(x)$ must also have rational coefficients.
\end{theorem}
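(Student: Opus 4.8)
The plan is to prove the statement by a downward induction on the coefficients of $f$, exploiting the fact that $f$ is monic. Write $f(x) = x^d + \sum_{j=1}^{d} b_j\, x^{d-j}$, where $d = \deg f$ and the coefficients $b_j \in \mathbb{C}$ are a priori unknown; it is convenient to set $b_0 = 1$ for the leading term. Expanding $[f(x)]^n$ and collecting the coefficient $c_k$ of $x^{nd-k}$, one obtains $c_k = \sum \prod_{\ell=1}^n b_{j_\ell}$, where the sum runs over all tuples $(j_1,\dots,j_n)$ of nonnegative integers with $\sum_\ell j_\ell = k$ (and $b_{j}=0$ for $j>d$). By hypothesis every $c_k \in \mathbb{Q}$.

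The key observation is that the single tuple in which one index equals $k$ and the remaining $n-1$ indices vanish contributes exactly $n\, b_k$ (there are $n$ choices of which factor carries the index $k$, and $b_0=1$), while every other contributing tuple has all of its indices strictly smaller than $k$. Hence, for each $1 \le k \le d$, one can write
\begin{equation}
c_k = n\, b_k + P_k(b_1,\dots,b_{k-1}),
\end{equation}
where $P_k$ is a polynomial with nonnegative integer coefficients in the lower-index variables.

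First I would treat the base case $k=1$: here $c_1 = n\, b_1$, so $b_1 = c_1/n \in \mathbb{Q}$. Then, assuming inductively that $b_1,\dots,b_{k-1} \in \mathbb{Q}$, the quantity $P_k(b_1,\dots,b_{k-1})$ is rational, and therefore $b_k = (c_k - P_k)/n \in \mathbb{Q}$. Running the induction up to $k=d$ shows that all coefficients of $f$ are rational, which is the claim.

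I expect the only real obstacle to be bookkeeping: verifying cleanly that the index-$k$ term is isolated with coefficient $n$ and that all remaining multinomial contributions involve strictly lower indices. An alternative, coordinate-free route avoids this combinatorics via Galois theory: for any field automorphism $\sigma$ of $\mathbb{C}$ fixing $\mathbb{Q}$ (equivalently, working inside a finite Galois extension $K/\mathbb{Q}$ containing the coefficients and roots of $f$), applying $\sigma$ to the coefficients gives $(\sigma f)^n = \sigma(f^n) = f^n$ in $\mathbb{C}[x]$; since $\mathbb{C}[x]$ is a unique factorization domain and $f,\sigma f$ are monic, comparing multisets of roots with multiplicities and dividing each multiplicity by $n$ forces $\sigma f = f$. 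As $f$ is fixed by every such $\sigma$, its coefficients lie in $\mathbb{Q}$. I would present the inductive argument as the primary proof because it is elementary and self-contained, relegating the Galois-theoretic version to a remark.
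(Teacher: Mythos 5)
Your proposal is correct and follows essentially the same route as the paper: a downward induction on the coefficients of the monic $f$, isolating the term $n\,b_k$ in the coefficient of $x^{nd-k}$ against a polynomial in the lower-index (already-known-rational) coefficients, with the Galois-theoretic argument relegated to a remark exactly as the paper's alternative proof. If anything, your explicit tuple-counting justification that the index-$k$ contribution is precisely $n\,b_k$ while all other tuples involve only indices strictly below $k$ is more careful than the paper's unproved assertion of the same fact.
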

\begin{proof}
	Let $f(x) = x^{d} + a_{d-1}x^{d-1} + \cdots a_{0}$. Expanding $[f(x)]^{n}$, we have
	\begin{align}
		f^{n}(x) = & x^{nd} + na_{n-1}x^{nd -1} +
        \left[na_{n-2} + \frac{n(n-1)}{2}a_{n-1}^{2}\right]x^{nd-2} + \cdots
	\end{align}
	Let's examine the coefficients from the highest degree terms:
	\begin{enumerate}
		\item The coefficient of $x^{nd -1}$ is $n a_{n-1}$, which is rational; therefore $a_{n-1}$ is rational.
		\item Assume coefficients $a_{n-i + 1}, \cdots a_{n-1}$ are rational for some $i\ge 2$.
		      The coefficient of $x^{nd-i}$ term is $na_{n-i} + c_{n-i}$, where $c_{n-i}$ is a rational polynomial involving higher-order coefficients $a_{n-i+1}, \cdots, a_{n-1}$. The expression $c_{n-1}$ is rational by the inductive hypothesis, therefore $a_{n-i}$ is rational.
	\end{enumerate}
	By induction, all coefficients $a_{i}$ for $i\in[0, d-1]$ are rational. Therefore, $f(x)$ has rational coefficients.

	There is a simpler proof using Galois theory: Let $\mathbb{K} = \mathbb{Q}(a_{n-1}, \cdots a_{0})$ be the field extensions generated by the coefficients $a_{i}$. Let $\sigma \in \text{Gal}(\mathbb{K}/\mathbb{Q})$ be an automorphism of the Galois group of the field extension $\mathbb{K}/\mathbb{Q}$. Since $f^{n}$ has rational coefficients, we have $\sigma(f^{n}) = \sigma(f)^{n} = f^{n}$. Because $f$ is monic, this implies $\sigma(f) = f$. As a result, $\sigma$ fixes all coefficients $a_{i}$, and so every automorphism in the Galois group acts trivially. Therefore, $\text{Gal}(\mathbb{K}/\mathbb{Q})$ is trivial. By the fundamental theorem of Galois theory \cite{stewart2022galois}, $\mathbb{K} = \mathbb{Q}$, and all coefficients $a_{i}$ are rational.
\end{proof}
\noindent
This completes the proof that $\phi_{\lambda}^{(D_\lambda)}(\mu)$ has rational coefficients, which corresponds to the characteristic polynomial of each Krylov subspace identified by the CA method. We have now proved the main theorem in the main text, which we briefly restate:
\begin{theorem}[\textbf{The main theorem of this paper}]
	If all the eigenvalues of each element in $\bold{Basis}(\mathcal{Z})$ are rational, then ICPF $\succeq$ CA.
\end{theorem}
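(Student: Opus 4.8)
The plan is to establish ICPF $\succeq$ CA by showing that the CA decomposition furnishes a valid (though possibly reducible) rational factorization of the characteristic polynomial $\Phi(\mu)$, which by uniqueness of irreducible rational factorization forces every ICPF factor to sit inside some CA factor. The central object is the projector $\projection_\lambda$ onto each summand $\lambda$ in \cref{eqn: Hdecomp}. The whole argument reduces to one pivotal claim: \textbf{under the rationality hypothesis on $\bold{Basis}(\mathcal{Z})$, each $\projection_\lambda$ is a rational operator in the Hamiltonian basis.} Once this is in hand, everything else is essentially bookkeeping.

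First I would assemble the rational scaffolding. Using \cref{theoremSM: Z is rational under u}, I may take $\bold{Basis}(\mathcal{Z})=\{\hat{z}_k\}$ to consist of rational operators, since the defining commutation equations \eqref{seq:basisZ} have rational coefficients and hence rational solutions. Next I would write down the explicit Lagrange-interpolation formula \eqref{eqnSM: explicit expression of P lambda k} for $\projection_\lambda$ as a polynomial in the $\hat{z}_k$ with coefficients built from the eigenvalues $c_{\lambda'}(\hat{z}_k)$. The rationality hypothesis enters precisely here: if every eigenvalue $c_\lambda(\hat{z}_k)\in\mathbb{Q}(i)$, then the interpolation coefficients are rational, and a rational polynomial in rational operators is again rational, so $\projection_\lambda$ is rational. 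I would verify the formula actually produces the correct projector by checking it acts as the identity on $V_\lambda$ and annihilates every other joint eigenspace $V_{\lambda'}$ — this is immediate from the factor structure, since for $\lambda'\neq\lambda$ there is some $k$ with $c_{\lambda'}(\hat{z}_k)\neq c_\lambda(\hat{z}_k)$, contributing a vanishing numerator.

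With $\projection_\lambda$ rational and $\hamiltonian(\{J_\eta\})$ rational (taking all $J_\eta\in\mathbb{Q}$), the projected block $\projection_\lambda\hamiltonian\projection_\lambda$ has a rational matrix representation, so its characteristic polynomial $\Phi_\lambda(\mu)$ in \cref{eqn: CA polynomial} is rational. Because summand $\lambda$ carries $d_\lambda$ identical copies of a $D_\lambda$-dimensional block, one has the exact power structure $\Phi_\lambda(\mu)=\big[\phi_\lambda^{(D_\lambda)}(\mu)\big]^{d_\lambda}$. The remaining subtlety is that the per-Krylov polynomial $\phi_\lambda^{(D_\lambda)}$ is itself rational, not merely its $d_\lambda$-th power; I would invoke the lemma already proved in the excerpt — a monic polynomial whose $n$-th power is rational must be rational (either by induction on coefficients from the top degree down, or by the Galois-theoretic triviality of $\mathrm{Gal}(\mathbb{K}/\mathbb{Q})$). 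Assembling these gives $\Phi(\mu)=\prod_\lambda\big[\phi_\lambda^{(D_\lambda)}(\mu)\big]^{d_\lambda}$, a genuine rational factorization.

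Finally I would close the argument by comparison with the ICPF factorization \cref{eqn: ICPF factorization form}. By Gauss's lemma the irreducible rational factorization is unique, so every irreducible factor $\phi_p^{(D_p)}$ appearing in ICPF must divide some $\phi_\lambda^{(D_\lambda)}$; the Krylov subspace $\mathcal{H}_{p,\delta_p}$ built from the null space of $\phi_p^{(D_p)}(\hamiltonian)$ therefore lies inside the CA Krylov subspace $\mathcal{H}_{\lambda,\delta_\lambda}$, which is exactly the statement ICPF $\succeq$ CA. The step I expect to be the genuine obstacle is the rationality of $\projection_\lambda$: the Lagrange formula is clean, but one must be careful that $\bold{Basis}(\mathcal{Z})$ jointly separates all summands (so that the product over $k$ and over $\Lambda(\lambda,k)$ genuinely isolates $V_\lambda$) and that no denominator $c_\lambda(\hat{z}_k)-c_{\lambda'}(\hat{z}_k)$ is spuriously included when the eigenvalues coincide — both of which follow from the definition of $\Lambda(\lambda,k)$ and the fact that distinct summands are distinguished by at least one $\hat{z}_k$, but deserve explicit checking.
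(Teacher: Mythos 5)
Your proposal is correct and follows essentially the same route as the paper's own proof: rational $\bold{Basis}(\mathcal{Z})$ from the rational commutation equations, the Lagrange-interpolation projector \eqref{eqnSM: explicit expression of P lambda k} made rational by the eigenvalue hypothesis, rationality of $\Phi_\lambda(\mu)$, the monic-power lemma for $\phi_\lambda^{(D_\lambda)}$, and uniqueness of the irreducible rational factorization to place each ICPF Krylov subspace inside a CA one. The only addition beyond the paper's argument is your explicit check that the interpolation formula genuinely isolates $V_\lambda$, which the paper leaves implicit.
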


\noindent

\subsection{ICPF $\succ$ CA: coefficient-dependent basis}

In models such as the quantum breakdown model, the HSFs from the two methods are different, and the inequality in our main theorem is not saturated. The ICPF method further decomposes the CA Krylov subspaces into smaller components, namely the ICPF subspaces. In such cases, the finer ICPF subspaces must depend on the Hamiltonian coefficients $\{J_\eta\}$. This is stated in Proposition 1 of the main text, which can be proved as follows.

Suppose that a CA Krylov subspace decomposes into a direct sum of ICPF Krylov subspaces: $\mathcal{K}_{\text{CA}} = \bigoplus_i \mathcal{K}_{\text{ICPF}}^{(i)}$, and assume that the basis of ${\cal K}_{\text{ICPF}}^{(i)}$ is independent of $\{J_\eta\}$. This implies that each subspace is invariant under the action of the Hamiltonian for any choice of coefficients:
\begin{equation}
H(\{J_\eta\})\ket{v}\in {\cal K}_{\text{ICPF}}^{(i)}, \quad \forall \ket{v} \in {\cal K}_{\text{ICPF}}^{(i)}.
\end{equation}
Since ${\cal K}_{\text{ICPF}}^{(i)}$ are assumed to be independent of $\{J_\eta\}$, we may set $H(\{J_\eta\}) = \hat{h}_{\eta}$, yielding:
\begin{equation}
\hat{h}_{\eta}\ket{v}\in {\cal K}_{\text{ICPF}}^{(i)}, \quad \forall \eta, \;\ket{v} \in {\cal K}_{\text{ICPF}}^{(i)}.
\end{equation}
This means that the subspace ${\cal K}_{\text{ICPF}}^{(i)}$ is invariant under the action of all local terms $\hat{h}_\eta$, implying that $\mathcal{L}(\mathcal{H}_{\mathcal{A}})$ in \eqref{eqnSM: decom H into HA and HC} is reducible. This contradicts the assumption that $\mathcal{H}_{\mathcal{A}}$ supports an irreducible representation.

\section{$t$-$J_{z}$ model}

\subsection{Model description}
The $t$-$J_{z}$ model~\cite{PhysRevB.55.6491, PhysRevLett.85.4755} is a 1-dimensional lattice model that describes spinful fermions, with their creation operators at the $i$th site denoted as $c^{\dagger}_{i,\uparrow}$, $c^{\dagger}_{i, \downarrow}$ respectively.  Given a system of size $L$, the Hamiltonian is
\begin{equation}
H_{t-J_z} = \sum_{i=1}^{L-1}\left[ -t_{i, i+1} \sum_{\sigma \in \{\uparrow, \downarrow\} }\left(\tilde{c}_{i, \sigma}\tilde{c}^{\dagger}_{i + 1, \sigma} + \textrm{h.c.}\right) + J_{i, i+1}S_{i}^{z}S_{i + 1}^{z} \right]+ \sum_{i=1}^{L}\left[ h_i S^{z}_{i}  + g_i(S^{z}_{i})^2\right],
\end{equation}
where the modified fermionic operators are given by $\tilde{c}_{i, \sigma} \equiv c_{i}^{\sigma}(1 - c_{i, -\sigma}^{\dagger}c_{i, -\sigma})$, and 
$S^{z}_i \equiv \tilde{c}_{i, \uparrow}\tilde{c}^{\dagger}_{i, \uparrow} - \tilde{c}_{i, \downarrow}\tilde{c}^{\dagger}_{i, \downarrow} $. The Hilbert space of the model excludes double occupancy of fermions at the same site: each site can either be empty, occupied by a spin-up fermion, or occupied by a spin-down fermion. Therefore, the total dimension of the Hilbert space is $3^L$. The single-site operators 
$S^{z}_{i}$ and $(S^{z}_{i})^2$ break symmetries that are not of interest. 

The $t$-$J_z$ model has two (local) integrals of motion, which correspond to the conserved particle numbers of the two fermion species $N_{\uparrow}$, $N_{\downarrow}$:
\begin{equation}
\label{eqnSM: conserved quantities of tJ model}
N^{\uparrow} = \sum_{i =1 }^{L}\tilde{c}_{i, \uparrow}\tilde{c}^{\dagger}_{i , \uparrow}, \quad N^{\downarrow} = \sum_{i =1 }^{L}\tilde{c}_{i, \downarrow}\tilde{c}^{\dagger}_{i , \downarrow}.
\end{equation}
Summing over all possible values of $N_{\uparrow}$ and $N_{\downarrow}$ recovers the total dimension of the Hilbert space $3^L$:
\begin{equation}
\sum_{N_{\uparrow} =0}^{L}\sum_{N_{\downarrow} = 0}^{L - N_{\uparrow}} \frac{L!}{N_{\uparrow}!N_{\downarrow}!(L - N_{\uparrow}! -N_{\downarrow}!)} = (1 + 1 + 1)^L = 3^L.
\end{equation}

\subsection{Hilbert space fragmentation}

The nontrivial actions of the Hamiltonian involve shifting spin-up and spin-down states to the left and the right:
\begin{equation}
\ket{\cdots \, 0 \uparrow\, \cdots} \leftrightarrow \ket{\cdots \,  \uparrow 0 \, \cdots}, \quad
\ket{\cdots \, 0 \downarrow\, \cdots} \leftrightarrow \ket{\cdots \,  \downarrow 0 \, \cdots}.
\end{equation}
Importantly, a spin-up state cannot pass through a spin-down state, because double occupancy is excluded. Therefore, under OBC, different sectors of the Krylov subspaces are uniquely determined by the ordering of $\ket{\uparrow}$ and $\ket{\downarrow}$ in the spin chain. For example, $\ket{\uparrow\,\downarrow 0 \uparrow 0}$ and $\ket{\uparrow\,0\downarrow  0 \uparrow }$ belong to the same Krylov subspace, while $\ket{\uparrow 0 \downarrow 0 \uparrow }$ and $\ket{\uparrow 0\uparrow 0 \downarrow }$, although belonging to the same symmetry sector, lie in different Krylov subspaces. In each sector, there are $d_{N_{\uparrow}, N_{\downarrow}} = \frac{(N_{\uparrow} + N_{\downarrow})!}{N_{\uparrow}!N_{\downarrow}}$ ways to order $\uparrow$ and $\downarrow$, so the $(N_{\uparrow}, N_{\downarrow})$ sector decomposes into $d_{N_{\uparrow},N_{\downarrow}}$ distinct Krylov subspaces, each of which has dimension $D_{N_{\uparrow}, N_{\downarrow}} = \frac{L!}{(N_{\uparrow} + N_{\downarrow})!(L-N_{\uparrow}- N_{\downarrow})!}$.

The Krylov subspaces described above can be captured using the ICPF approach by substituting integer coefficients into the Hamiltonian. As an example, let's consider the model with $L=4$. Choose a random set of integer coefficients
$$
\begin{cases}
    (t_{1, 2}, t_{2, 3}, t_{3, 4}) &= (-3, -4, 2)
    \\
    (J_{1, 2}, J_{2, 3}, J_{3, 4}) &= (-1, -4, 5)
    \\
    (h_1, h_2, h_3, h_4) &= (2, 5, -5, -1)
    \\
    (g_1, g_2, g_3, g_4) &= (-3, -6, -3, 1)
\end{cases}
,$$
the ICPF approach characteristics the Hilbert space fragmentation by the factorization:
\begin{equation}
\begin{aligned}
    &\det(\mu \mathds{1} - H) 
    = \phi_1^{(4)}(\mu) \cdot \phi_2^{(4)}(\mu) \cdot \phi_3^{(4)}(\mu) \cdots
    \\
    =&
    (\mu^4 + 2 \mu^3 - 256 \mu^2 - 362 \mu + 
   12327) (\mu^4 + 36 \mu^3 + 324 \mu^2 + 
   48 \mu - 1665) (\mu^4 + 64 \mu^3 + 
   1460 \mu^2 + 13684 \mu + 41607) \cdots
,\end{aligned}
\end{equation}
where we only explicitly show the factors corresponding to the symmetry sector with $N_\uparrow = 1$ and $N_\downarrow=2$, which has dimension 12, while the factors corresponding to other symmetry sectors are abbreviated as $\cdots$. The superscript $D_p$ in the factor $\phi_{p}^{(D_p)}(\mu)$ denotes the highest order of $\mu$, while the subscript $i$ serves as a label distinguishing different factors. According to the ICPF approach, the symmetry sector decomposes into three distinct Krylov subspaces, each of dimension four. The Krylov subspaces are spanned by specific sets of basis:
\begin{equation}
\label{eqnSM: ICPF of the t-J model basis}
    \begin{cases}
        \phi_1^{(4)}(\mu):& 
        \ket{\,\uparrow \,\downarrow\, \downarrow 0}, 
        \ket{\,\uparrow \,\downarrow 0\downarrow }, 
        \ket{\,\uparrow 0 \downarrow\, \downarrow},
        \ket{0\uparrow \,\downarrow\, \downarrow}
        \\
        \phi_2^{(4)}(\mu):& 
        \ket{\,\downarrow \,\uparrow\, \downarrow 0},
        \ket{\,\downarrow \,\uparrow 0\downarrow }, 
        \ket{\,\downarrow 0 \uparrow\, \downarrow},
        \ket{0\downarrow \,\uparrow\, \downarrow}
        \\
        \phi_3^{(4)}(\mu):& 
        \ket{\,\downarrow \,\downarrow\, \uparrow 0},
        \ket{\,\downarrow \,\downarrow 0\uparrow }, 
        \ket{\,\downarrow 0 \downarrow\, \uparrow},
        \ket{0\downarrow \,\downarrow\, \uparrow}
    \end{cases}
.\end{equation}
We denote the root state---any one of the basis vectors---by $\ket{\Psi_p}$, which satisfies $\phi_{p}^{(4)}(H) \ket{\Psi_p} = 0$:
\begin{equation}
\label{eqnSM: ICPF of the t-J model characteristic poly}
    \begin{cases}
        \phi_1^{(4)}(\mu):& \ket{\Psi_1}=\ket{\,\uparrow \,\downarrow\, \downarrow 0}
        \longrightarrow
        (H^4 + 2 H^3 - 256 H^2 - 362 H + 12327) \ket{\Psi_1} = 0
        \\
        \phi_2^{(4)}(\mu):& \ket{\Psi_2}=\ket{\,\downarrow \,\uparrow\, \downarrow 0}
        \longrightarrow
        (H^4 + 36 H^3 + 324 H^2 + 48 H - 1665)\ket{\Psi_2} = 0
        \\
        \phi_3^{(4)}(\mu):& \ket{\Psi_3}=\ket{\,\downarrow \,\downarrow\, \uparrow 0}
        \longrightarrow
        (H^4 + 64 H^3 + 1460 H^2 + 13684 H + 41607)\ket{\Psi_3} = 0
    \end{cases}
.\end{equation}
These root states, under the action of the Hamiltonian, generate their corresponding 4 dimensional Krylov subspaces.

On the other hand, we can study HSF using the CA approach. The commutants of the $t$-$J_z$ model are
\begin{equation}
    N^{\sigma_1 \sigma_2 \cdots \sigma_k}
    =
    \sum_{j_1 <j_2 <\cdots<j_k} N_{j_1}^{\sigma_1} N_{j_2}^{\sigma_2} \cdots N_{j_k}^{\sigma_k}
,\end{equation}
where $N_{j}^{\sigma}=\tilde{c}_{j, \sigma}\tilde{c}^{\dagger}_{j , \sigma}$ and $1 \le k \le L$. In particular, $k=1$ corresponds to the conserved charges introduced in \cref{eqnSM: conserved quantities of tJ model}. The commutants take distinct (one-dimensional) representations within each Krylov subspace. In the symmetry sector $(N_\uparrow, N_\downarrow)=(1, 2)$, each Krylov subspace---represented by one of its basis states---carries the corresponding representations of the commutants:
\begin{equation}
\label{eqnSM: commutant table of the t-J model}
\begin{array}{|c|c|c|c|c|c|c|c|c|c|c|c|c|c|c|}
        \hline
         & N^{\uparrow} & N^{\downarrow} & N^{\uparrow\uparrow} & N^{\uparrow\downarrow} & N^{\downarrow\uparrow} & N^{\downarrow\downarrow} & N^{\uparrow\uparrow\uparrow} & N^{\uparrow\uparrow\downarrow} & N^{\uparrow\downarrow\uparrow} & N^{\uparrow\downarrow\downarrow} & N^{\downarrow\uparrow\uparrow} & N^{\downarrow\uparrow\downarrow} & N^{\downarrow\downarrow\uparrow} & N^{\downarrow\downarrow\downarrow} \\ \hline
        \ket{\,\uparrow \,\downarrow\, \downarrow 0} & 1 & 2 & 0 & 2 & 0 & 1 & 0 & 0 & 0 & 1 & 0 & 0 & 0 & 0 \\ \hline
        \ket{\,\downarrow \,\uparrow\, \downarrow 0} & 1 & 2 & 0 & 1 & 1 & 1 & 0 & 0 & 0 & 0 & 0 & 1 & 0 & 0 \\ \hline
        \ket{\,\downarrow \,\downarrow\, \uparrow 0} & 1 & 2 & 0 & 0 & 2 & 1 & 0 & 0 & 0 & 0 & 0 & 0 & 1 & 0 \\ \hline
\end{array}
\end{equation}

Comparing \cref{eqnSM: ICPF of the t-J model characteristic poly} and \cref{eqnSM: commutant table of the t-J model}, we see that in the $(N_{\uparrow}, N_{\downarrow}) = (1, 2)$ sector, the ICPF approach and the CA approach yield equivalent definitions of the Krylov subspaces $L = 4$ $t$-$J_z$ model. As a result, both approaches lead to the same HSF structure. This equivalence extends to other symmetry sectors as well, and the entire Hilbert space is fragmented identically under both approaches. More generally, we found that the ICPF approach coincides with the CA approach for all system sizes up to $L \leq 8$.

\section{Pair-Flip model}

\subsection{Model description}
The pair-flip model~\cite{caha2018pairflip} describes a 1-dimensional lattice model with size $L$, each site corresponds to an $m$-level system. The evolution of the model exhibits HSF under the Hamiltonian
\begin{equation}
H_{\text{PF}} = \sum_{j=1}^{L-1} \sum_{\alpha, \beta=1}^{m} 
    g_{j}^{\alpha, \beta}
    \underbrace{
    \left( \rket{\alpha \alpha} \lket{\beta \beta} \right)_{j, j+1}
    }_{F_{j,j+1}^{\alpha,\beta}}
    +
    \sum_{j=1}^{L} \sum_{\alpha=1}^{m} 
    \mu_{j}^{\alpha}
    \underbrace{\left( \rket{\alpha} \lket{\alpha} \right)_{j}}_{N_{j}^{\alpha}}
.\end{equation}
Here, $g_{i}^{\alpha, \beta}={g_{i}^{\beta, \alpha}}^*$ to ensure hermiticity.

The system has $U(1)$ conserved quantities:
\begin{equation} 
N^{\alpha} = \sum_{j}(-1)^{j}N^{\alpha}_{j}.
\end{equation}
However, these quantities are not independent; they satisfy the relation $\sum_{\alpha}N^{\alpha} = \mathds{1}$. Therefore, the total global symmetry is $U(1)^{m-1}$.

For models with $m=3$, we introduce a color-based notation:
\[
\ket{1} = \tket{\node[circle, fill=red, inner sep=1.5pt] {};}, \quad
\ket{2} = \tket{\node[circle, fill=mydarkgreen, inner sep=1.5pt] {};}, \quad
\ket{3} = \tket{\node[circle, fill=blue, inner sep=1.5pt] {};}. 
\]
To represent product states, we construct the graphs in the following way: we start with unconnected colored dots. Then, neighboring dots of the same color are connected to form dimers, which are subsequently excluded from further consideration. This process is repeated with the remaining unconnected dots until no additional dimers can be made. The final graph is composed of isolated dots and dimers. For example:
\[
\ket{323321} =
\ket{\vphantom{\int}\;\tikz[baseline={([yshift=-1.4ex]current bounding box.center)}]{
    \node[circle, fill=blue, inner sep=1.5pt] at (0,0) {};
    \node[circle, fill=mydarkgreen, inner sep=1.5pt] (A) at (0.5,0) {};
    \node[circle, fill=blue, inner sep=1.5pt] (B) at (1,0,0) {};
    \node[circle, fill=blue, inner sep=1.5pt] (C) at (1.5,0) {};
    \node[circle, fill=mydarkgreen, inner sep=1.5pt] (D) at (2,0,0) {};
    \node[circle, fill=red, inner sep=1.5pt] at (2.5,0) {};
    \draw[thick, mydarkgreen] (A) to[out=30, in=150] (D); 
    \draw[thick, blue] (B) to (C); 
}\;}.
\]

\subsection{Hilbert space fragmentation}

By applying $F^{\alpha, \beta}_{i, i+1}$ to pure product states, any configuration of dimers can be transformed into any set of non-crossing dimers of arbitrary colors. For example:
\[
\ket{\vphantom{\int}\;\tikz[baseline={([yshift=-1.4ex]current bounding box.center)}]{
    \node[circle, fill=red, inner sep=1.5pt] (A) at (0,0) {};
    \node[circle, fill=blue, inner sep=1.5pt] (B) at (0.5,0,0) {};
    \node[circle, fill=blue, inner sep=1.5pt] (C) at (1,0) {};
    \node[circle, fill=red, inner sep=1.5pt] (D) at (1.5,0,0) {};
    \draw[thick, red] (A) to[out=30, in=150] (D); 
    \draw[thick, blue] (B) to (C); 
}\;} 
\leftrightarrow
\ket{\vphantom{\int}\;\tikz[baseline={([yshift=-1.4ex]current bounding box.center)}]{
    \node[circle, fill=red, inner sep=1.5pt] (A) at (0,0) {};
    \node[circle, fill=red, inner sep=1.5pt] (B) at (0.5,0,0) {};
    \node[circle, fill=red, inner sep=1.5pt] (C) at (1,0) {};
    \node[circle, fill=red, inner sep=1.5pt] (D) at (1.5,0,0) {};
    \draw[thick, red] (A) to[out=30, in=150] (D); 
    \draw[thick, red] (B) to (C); 
}\;} =  
\ket{\vphantom{\int}\;\tikz[baseline={([yshift=-0.5ex]current bounding box.center)}]{
    \node[circle, fill=red, inner sep=1.5pt] (A) at (0,0) {};
    \node[circle, fill=red, inner sep=1.5pt] (B) at (0.5,0,0) {};
    \node[circle, fill=red, inner sep=1.5pt] (C) at (1,0) {};
    \node[circle, fill=red, inner sep=1.5pt] (D) at (1.5,0,0) {};
    \draw[thick, red] (A) to (B); 
    \draw[thick, red] (C) to (D); 
}\;} \leftrightarrow 
\ket{\vphantom{\int}\;\tikz[baseline={([yshift=-0.5ex]current bounding box.center)}]{
    \node[circle, fill=mydarkgreen, inner sep=1.5pt] (A) at (0,0) {};
    \node[circle, fill=mydarkgreen, inner sep=1.5pt] (B) at (0.5,0,0) {};
    \node[circle, fill=mydarkgreen, inner sep=1.5pt] (C) at (1,0) {};
    \node[circle, fill=mydarkgreen, inner sep=1.5pt] (D) at (1.5,0,0) {};
    \draw[thick, mydarkgreen] (A) to (B); 
    \draw[thick, mydarkgreen] (C) to (D); 
}\;}.
\]
In addition, any unpaired dot can move past dimers under $F^{\alpha, \beta}_{i, i+1}$. For example
\[
\tket{
    \draw[thick, blue] (0.5,0) -- (1, 0);
    \node[circle, fill=red, inner sep=1.5pt] at (0,0) {};
    \node[circle, fill=blue, inner sep=1.5pt] at (0.5,0) {};
    \node[circle, fill=blue, inner sep=1.5pt] at (1,0) {};
} \leftrightarrow
\tket{
    \draw[thick, red] (0.5,0) -- (1, 0);
    \node[circle, fill=red, inner sep=1.5pt] at (0,0) {};
    \node[circle, fill=red, inner sep=1.5pt] at (0.5,0) {};
    \node[circle, fill=red, inner sep=1.5pt] at (1,0) {};
} =
\tket{
    \draw[thick, red] (0,0) -- (0.5, 0);
    \node[circle, fill=red, inner sep=1.5pt] at (0,0) {};
    \node[circle, fill=red, inner sep=1.5pt] at (0.5,0) {};
    \node[circle, fill=red, inner sep=1.5pt] at (1,0) {};
} \leftrightarrow
\tket{
    \draw[thick, blue] (0,0) -- (0.5, 0);
    \node[circle, fill=blue, inner sep=1.5pt] at (0,0) {};
    \node[circle, fill=blue, inner sep=1.5pt] at (0.5,0) {};
    \node[circle, fill=red, inner sep=1.5pt] at (1,0) {};
}.
\]
We observe that no matter how we apply $F^{\alpha, \beta}_{i, i+1}$ or $N^{\alpha}_i$ to the product states, the ordering of the singular dots remains unchanged. Therefore, different Krylov subspaces can be distinguished by the orderings of the singular dots.

Using the ICPF approach, these Krylov subspaces can be characterized by the factors obtained through polynomial factorization. Consider the Hamiltonian of a $L=6$ model with a random set of integer coefficients. The characteristic polynomial factorizes into multiple components: one factor of degree 87, 6 distinct factors of degrees 47, 24 distinct factors of degrees 11, and 96 distinct factors of degrees 1. We denote the corresponding polynomials by $\phi_{1}^{(87)}(\mu)$, $\phi_{2\sim7}^{(47)}(\mu)$, $\phi_{8\sim31}^{(11)}(\mu)$, $\phi_{32\sim127}^{(87)}(\mu)$, respectively. The basis vectors of the subspace corresponding to the factor $\phi_{1}^{(87)}(\mu)$ consist of fully connected dimers, i.e., without singular dots. The basis vectors associated with each $\phi_{p}^{(47)}(\mu)$ contain 2 connected dimers and 2 singular dots. Those corresponding to the factor $\phi_{p}^{(11)}(\mu)$ consist of 1 connected dimers and 4 singular dots. Finally, the basis vectors of the subspace corresponding to the factor $\phi_{p}^{(1)}(\mu)$ are entirely composed of singular dots.

The commutants of the model, which commutes with all $F^{\alpha, \beta}_{i, i+1}$ and $N^{\alpha}_i$, are:
\begin{equation}
\label{eqnSM: commutant generators of PF}
N^{\alpha_1 \cdots \alpha_{k}}
= \sum_{j_1 < \cdots  <j_k} (-1)^{\sum_{l=1}^{k}j_l}N_{j_1}^{\alpha_1}\cdots N_{j_k}^{\alpha_k}, \quad \alpha_{j} \neq \alpha_{j+1}
.\end{equation}
For $L=6$ model, the commutants decompose the Hilbert space into multiple irreps, each characterized by a distinct set of 1-dimensional representations of the commutant.
\begin{equation}
\label{eqnSM: commutant table of the PF model}
\begin{array}{|c|c|c|c|c|c|c|c|c|c|c|c|c|c|c|}
        \hline
         & N^{\bs} & N^{\gs} & N^{\rs} & N^{\bs\gs} & N^{\bs\rs} & N^{\gs\bs} & N^{\gs\rs} & N^{\rs\bs} & N^{\rs\gs} & N^{\bs\gs\rs} & D_\lambda & d_\lambda \\ \hline
        \tket{
    \draw[thick, blue] (0.0,0) -- (0.5, 0);
    \draw[thick, blue] (1.0,0) -- (1.5, 0);
    \draw[thick, blue] (2.0,0) -- (2.5, 0);
    \node[circle, fill=blue, inner sep=1.5pt] at (0,0) {};
    \node[circle, fill=blue, inner sep=1.5pt] at (0.5,0) {};
    \node[circle, fill=blue, inner sep=1.5pt] at (1,0) {};
    \node[circle, fill=blue, inner sep=1.5pt] at (1.5,0) {};
    \node[circle, fill=blue, inner sep=1.5pt] at (2,0) {};
    \node[circle, fill=blue, inner sep=1.5pt] at (2.5,0) {};
} 
        & 0 & 0 & 0 & 0 & 0 & 0 & 0 & 0 & 0 & 0 & 87 & 1 \\ \hline
    \tket{
    \draw[thick, blue] (0.0,0) -- (0.5, 0);
    \draw[thick, blue] (1.0,0) -- (1.5, 0);
    \node[circle, fill=blue, inner sep=1.5pt] at (0,0) {};
    \node[circle, fill=blue, inner sep=1.5pt] at (0.5,0) {};
    \node[circle, fill=blue, inner sep=1.5pt] at (1,0) {};
    \node[circle, fill=blue, inner sep=1.5pt] at (1.5,0) {};
    \node[circle, fill=blue, inner sep=1.5pt] at (2,0) {};
    \node[circle, fill=mydarkgreen, inner sep=1.5pt] at (2.5,0) {};
} 
        & -1 & 1 & 0 & -1 & 0 & 0 & 0 & 0 & 0 & 0 & 47 & 1 \\ \hline
            \tket{
    \draw[thick, blue] (0.0,0) -- (0.5, 0);
    \draw[thick, blue] (1.0,0) -- (1.5, 0);
    \node[circle, fill=blue, inner sep=1.5pt] at (0,0) {};
    \node[circle, fill=blue, inner sep=1.5pt] at (0.5,0) {};
    \node[circle, fill=blue, inner sep=1.5pt] at (1,0) {};
    \node[circle, fill=blue, inner sep=1.5pt] at (1.5,0) {};
    \node[circle, fill=blue, inner sep=1.5pt] at (2,0) {};
    \node[circle, fill=red, inner sep=1.5pt] at (2.5,0) {};
} 
        & -1 & 0 & 1 & 0 & -1 & 0 & 0 & 0 & 0 & 0 & 47 & 1 \\ \hline
    \tket{
    \draw[thick, blue] (0.0,0) -- (0.5, 0);
    \draw[thick, blue] (1.0,0) -- (1.5, 0);
    \node[circle, fill=blue, inner sep=1.5pt] at (0,0) {};
    \node[circle, fill=blue, inner sep=1.5pt] at (0.5,0) {};
    \node[circle, fill=blue, inner sep=1.5pt] at (1,0) {};
    \node[circle, fill=blue, inner sep=1.5pt] at (1.5,0) {};
    \node[circle, fill=mydarkgreen, inner sep=1.5pt] at (2,0) {};
    \node[circle, fill=blue, inner sep=1.5pt] at (2.5,0) {};
}
        & 1 & -1 & 0 & 0 & 0 & -1 & 0 & 0 & 0 & 0 & 47 & 1 \\ \hline
        \tket{
    \draw[thick, blue] (0.0,0) -- (0.5, 0);
    \draw[thick, blue] (1.0,0) -- (1.5, 0);
    \node[circle, fill=blue, inner sep=1.5pt] at (0,0) {};
    \node[circle, fill=blue, inner sep=1.5pt] at (0.5,0) {};
    \node[circle, fill=blue, inner sep=1.5pt] at (1,0) {};
    \node[circle, fill=blue, inner sep=1.5pt] at (1.5,0) {};
    \node[circle, fill=mydarkgreen, inner sep=1.5pt] at (2,0) {};
    \node[circle, fill=red, inner sep=1.5pt] at (2.5,0) {};
}
        & 0 & -1 & 1 & 0 & 0 & 0 & -1 & 0 & 0 & 0 & 47 & 1 \\ \hline
                    \tket{
    \draw[thick, blue] (0.0,0) -- (0.5, 0);
    \draw[thick, blue] (1.0,0) -- (1.5, 0);
    \node[circle, fill=blue, inner sep=1.5pt] at (0,0) {};
    \node[circle, fill=blue, inner sep=1.5pt] at (0.5,0) {};
    \node[circle, fill=blue, inner sep=1.5pt] at (1,0) {};
    \node[circle, fill=blue, inner sep=1.5pt] at (1.5,0) {};
    \node[circle, fill=red, inner sep=1.5pt] at (2,0) {};
    \node[circle, fill=blue, inner sep=1.5pt] at (2.5,0) {};
}
        & 1 & 0 & -1 & 0 & 0 & 0 & 0 & -1 & 0 & 0 & 47 & 1 \\ \hline
    \tket{
    \draw[thick, blue] (0.0,0) -- (0.5, 0);
        \draw[thick, blue] (1.0,0) -- (1.5, 0);
    \node[circle, fill=blue, inner sep=1.5pt] at (0,0) {};
    \node[circle, fill=blue, inner sep=1.5pt] at (0.5,0) {};
    \node[circle, fill=blue, inner sep=1.5pt] at (1,0) {};
    \node[circle, fill=blue, inner sep=1.5pt] at (1.5,0) {};
    \node[circle, fill=red, inner sep=1.5pt] at (2,0) {};
    \node[circle, fill=mydarkgreen, inner sep=1.5pt] at (2.5,0) {};
}
        & 0 & 1 & -1 & 0 & 0 & 0 & 0 & 0 & -1 & 0 & 47 & 1 \\ \hline
                    \tket{
    \draw[thick, blue] (0.0,0) -- (0.5, 0);
    \node[circle, fill=blue, inner sep=1.5pt] at (0,0) {};
    \node[circle, fill=blue, inner sep=1.5pt] at (0.5,0) {};
    \node[circle, fill=blue, inner sep=1.5pt] at (1,0) {};
    \node[circle, fill=mydarkgreen , inner sep=1.5pt] at (1.5,0) {};
    \node[circle, fill=red, inner sep=1.5pt] at (2,0) {};
    \node[circle, fill=mydarkgreen, inner sep=1.5pt] at (2.5,0) {};
}
        & -1 & 2 & -1 & -2 & 1 & 0 & -1 & 0 & -1 & 1 & 11 & 1 \\ \hline
                    \tket{
    \draw[thick, blue] (0.0,0) -- (0.5, 0);
    \node[circle, fill=blue, inner sep=1.5pt] at (0,0) {};
    \node[circle, fill=blue, inner sep=1.5pt] at (0.5,0) {};
    \node[circle, fill=red, inner sep=1.5pt] at (1,0) {};
    \node[circle, fill=mydarkgreen , inner sep=1.5pt] at (1.5,0) {};
    \node[circle, fill=red, inner sep=1.5pt] at (2,0) {};
    \node[circle, fill=mydarkgreen, inner sep=1.5pt] at (2.5,0) {};
}
        & 0 & 2 & -2 & 0 & 0 & 0 & -1 & 0 & -3 & 0 & 11 & 1 \\ \hline
    \vdots & \vdots & \vdots & \vdots & \vdots & \vdots & \vdots & \vdots & \vdots & \vdots & \vdots & \vdots & \vdots 
    \\\hline
\end{array}
\end{equation}
Here, we do not list the full fragmentation structure for brevity. As shown in the table, different irreps---each represented by one of their their corresponding basis states---have at least one commutant that acts differently. For larger systems sizes $L$, the number of irreps, along with their corresponding $D_\lambda$ and $d_\lambda$, admit closed-form expressions. In particular, the number of irreps corresponding to basis vector states containing $p$ singular dots is given by
\begin{equation}
\begin{cases}
   n_{p}  =1& p = 1\\
   n_{p}  = m(m-1)^{p -1}& p > 1
\end{cases}.
\end{equation}
Each such irrep has degeneracy $d_\lambda=1$, and its dimension $D_{L, p}$ can be extracted from the generating function:
\begin{equation}
G(z)\equiv \sum_{\ell = 0}^{\infty}D_{\ell, p}z^{\ell}, \quad D_{0, 0} \equiv 1, \quad  D_{\ell, p} \equiv 0 \textrm{ if $p > \ell$},
\end{equation}
where
\begin{equation}
G(z) = \frac{2(m-1)}{m - 2 + m\sqrt{1 - 4(m -1)z^2}}\left(\frac{1-\sqrt{1 - 4(m -1)z^2}}{2(m-1)z}\right)^p.
\end{equation}

The dimensions of the irreps from the CA approach match exactly with the degree of the corresponding polynomial factors in the ICPF approach. Moreover, the basis of each CA Krylov subspace is annihilated by its associated characteristic polynomial evaluated on the Hamiltonian. For example, \(\phi_{1}^{(87)}(H)
        \tket{
    \draw[thick, blue] (0.0,0) -- (0.5, 0);
    \draw[thick, blue] (1.0,0) -- (1.5, 0);
    \draw[thick, blue] (2.0,0) -- (2.5, 0);
    \node[circle, fill=blue, inner sep=1.5pt] at (0,0) {};
    \node[circle, fill=blue, inner sep=1.5pt] at (0.5,0) {};
    \node[circle, fill=blue, inner sep=1.5pt] at (1,0) {};
    \node[circle, fill=blue, inner sep=1.5pt] at (1.5,0) {};
    \node[circle, fill=blue, inner sep=1.5pt] at (2,0) {};
    \node[circle, fill=blue, inner sep=1.5pt] at (2.5,0) {};
} 
= 0\). Consequently, we conclude that the ICPF approach is equivalent to the CA approach in the $L=6$ PF model. This equivalence has been verified to hold at least up to $L \le 8$.

\section{Temperley-Lieb Model}

\subsection{Model description}

We now study the Temperley-Lieb (TL) model~\cite{PhysRevB.40.4621, Aufgebauer_2010, M.T.Batchelor_1990, Read2007ti}, which belongs to the family of the pair-flip models, but exhibits \emph{quantum fragmentation} under the CA approach. The model is defined on a $m$-level chain under OBC and is decribed by the Hamiltonian:
\begin{equation}
H 
= \sum_{i=1}^{L-1} J_{i, i+1} e_{i}
= \sum_{i=1}^{L-1} J_{i, i+1}
\sum_{\alpha, \beta=1}^{m} 
\left( \ket{\alpha \alpha} \lket{\beta \beta} \right)_{i, i+1}
,\end{equation}
where $e_i = \left( \ket{\alpha \alpha} \lket{\beta \beta} \right)_{i, i+1}$. Similar to the pair-flip model, the TL model also has $m$ hermitian conserved charges:
\begin{equation}
\label{eqnSM: conserved charges of TL}
N^{\alpha} = \sum_{j}(-1)^{j}N^{\alpha}_{j}
,\end{equation}
where $\alpha$ runs from $1$ to $m$.

The basis states of the Krylov subspace in the TL model admit a convenient graphical representation. In this notation, each pair of maximally entangled sites is represented by a dimer:
$$
    \ket{\psi_{\textrm{dimer}}}
    =
    \tket{
    \draw[thick] (0.0,0) -- (0.5, 0);
    \node[circle, fill, inner sep=1.5pt] at (0,0) {};
    \node[circle, fill, inner sep=1.5pt] at (0.5,0) {};}
    =\sum_{\alpha=1}^{m}\ket{\alpha\alpha}
,$$
while frozen states are depicted as isolated dots, which are annihilated by the action of $e_i$. The advantage of this graphical representation is that it makes the action of $e_i$ more transparent. Recall that $e_i$ projects the state on sites $i$ and $i+1$ onto a maximally entangled pair---graphically, this corresponds to inserting a dimer between sites. The action of $e_i$ falls into one of four cases:
\begin{enumerate}
\item
If sites $i$ and $i+1$ are already connected by a dimer, then $e_i$ leaves the configuration unchanged and simply multiplies the state by a factor of $m$:
$$
e_i \ket{\;\tikz[baseline={([yshift=1ex]current bounding box.center)}]{
    \draw[thick ] (0,0) -- (0.5, 0);
    \node[circle, fill, inner sep=1.5pt, label=below:\(\scriptstyle i \)] at (0,0) {};
    \node[circle, fill, inner sep=1.5pt,  label=below:\( \scriptstyle i+1 \)] at (0.5,0) {};}} = m 
\ket{\;\tikz[baseline={([yshift=1ex]current bounding box.center)}]{
    \draw[thick ] (0,0) -- (0.5, 0);
    \node[circle, fill, inner sep=1.5pt, label=below:\(\scriptstyle i \)] at (0,0) {};
    \node[circle, fill, inner sep=1.5pt,  label=below:\( \scriptstyle i+1 \)] at (0.5,0) {};}}
.$$
\item 
If site $i$ is an isolated dot while site $i+1$ is connected by a dimer to some other site $l$, then the action of $e_i$ breaks the dimer between $i+1$ and $l$, leaving $l$ as an isolated dot, and forms a new dimer between $i$ and $i+1$:
$$
\ket{\;\tikz[baseline={([yshift=1ex]current bounding box.center)}]{
    \draw[thick ] (0.5,0) -- (1.0, 0);
    \node[circle, fill, inner sep=1.5pt, label=below:\(\scriptstyle i \)] at (0,0) {};
    \node[circle, fill, inner sep=1.5pt,  label=below:\( \scriptstyle i+1\)] at (0.5,0) {};
    \node[circle, fill, inner sep=1.5pt,  label=below:\( \scriptstyle l \)] at (1.0,0) {};
    }} = 
    \ket{\;\tikz[baseline={([yshift=1ex]current bounding box.center)}]{
    \draw[thick ] (0,0) -- (0.5, 0);
    \node[circle, fill, inner sep=1.5pt, label=below:\(\scriptstyle i \)] at (0,0) {};
    \node[circle, fill, inner sep=1.5pt,  label=below:\( \scriptstyle i +1 \)] at (0.5,0) {};
    \node[circle, fill, inner sep=1.5pt,  label=below:\( \scriptstyle l \)] at (1.0,0) {};
    }}
.$$
\item 
If sites $i$ and $i+1$ are each connected to different sites---$i$ to $k$ and $i+1$ to $l$---then the action of $e_i$ removes both existing dimers and replaces them with two new ones: one connecting $i$ and $i+1$, and the other connecting $k$ and $l$:
$$
e_i\ket{\;\tikz[baseline={([yshift=1ex]current bounding box.center)}]{
    \draw[thick ] (0,0) -- (0.5, 0);
    \draw[thick ] (1.0,0) -- (1.5, 0);
    \node[circle, fill, inner sep=1.5pt, label=below:\(\scriptstyle k \)] at (0,0) {};
    \node[circle, fill, inner sep=1.5pt,  label=below:\( \scriptstyle i \)] at (0.5,0) {};
    \node[circle, fill, inner sep=1.5pt,  label=below:\( \scriptstyle i+1  \)] at (1.0,0) {};
    \node[circle, fill, inner sep=1.5pt,  label=below:\( \scriptstyle l \)] at (1.5,0) {};
    }} 
    = 
    \ket{\;\tikz[baseline={([yshift=0ex]current bounding box.center)}]{
    \draw[thick ] (0.5,0) -- (1, 0);
    \draw[thick] (A) to[out=30, in=150] (D); 
    \node[circle, fill, inner sep=1.5pt, label=below:\(\scriptstyle k \)] (A) at (0,0) {};
    \node[circle, fill, inner sep=1.5pt,  label=below:\( \scriptstyle i \)] (B) at (0.5,0) {};
    \node[circle, fill, inner sep=1.5pt,  label=below:\( \scriptstyle i+1  \)] (C) at (1.0,0) {};
    \node[circle, fill, inner sep=1.5pt,  label=below:\( \scriptstyle l \)] (D) at (1.5,0) {};
    }}
.$$
\item
If sites $i$ and $i+1$ are isolated, corresponding to a frozen state, then $e_i$ anniliates the state:
$$
e_i \ket{\;\tikz[baseline={([yshift=1ex]current bounding box.center)}]{
    \node[circle, fill, inner sep=1.5pt, label=below:\(\scriptstyle i \)] at (0,0) {};
    \node[circle, fill, inner sep=1.5pt,  label=below:\( \scriptstyle i+1 \)] at (0.5,0) {};}} = 0
.$$
\end{enumerate}
The frozen state of size $l$ starting from site $i_s$ satisfies
\begin{equation}\label{eqnSM: Frozen states defintion}
e_i\ket{\psi_{\textrm{frozen}} } =0, \quad i = i_s, \cdots, i_s+l-2
,\end{equation}
which can be obtained by solving the ansatz
\begin{equation}
 \ket{\psi_{\textrm{frozen}}} = \sum_{\alpha_1, \cdots \alpha_L}C_{\alpha_1\cdots \alpha_L}\ket{\alpha_1\cdots \alpha_L}.
\end{equation}
Substituting the ansatz into equation \eqref{eqnSM: Frozen states defintion} yields a system of linear equations for the coefficients $C_{\alpha_1\cdots \alpha_n}$, all with rational coefficients. Consequently, $C_{\alpha_1\cdots \alpha_n}$ admits rational coefficients.

\subsection{Hilbert space fragmentation}

In addition to the conserved charges in \eqref{eqnSM: conserved charges of TL}, the model features more conserved charges of the form
\begin{equation}
M_{\alpha}^{\beta} = \sum_{j} (M_j)_\alpha^\beta, \;\;\; 
\begin{cases}
    (M_j)_\alpha^\beta = -(\ket{\beta}\lket{\alpha})_j & j \textrm{ even}
    \\
    (M_j)_\alpha^\beta = (\ket{\alpha}\lket{\beta})_j & j \textrm{ odd}
.\end{cases} 
\end{equation}
The conserved charges $N^{\alpha}$ can be expressed in terms of $M_{\alpha}^{\beta}$ by noting that $(M_j)_{\alpha}^{\alpha} \equiv (-1)^{j}N^{\alpha}_j$. The operators $M_{\alpha}^{\beta}$ ($1 \leq \alpha, \beta \leq m$) then generate an algebra of $m^2 - 1$ independent elements, corresponding to the global $U(m)$ symmetry of the model. To obtain the most general commutants, we follow the construction in \cite{Read2007ti} and define:
\begin{equation}
\overline{M}_{a_{1}\cdots a_{k}}{}^{b_{1}\cdots b_{k}} = \sum_{i_1 < \cdots < i_k} \prod_{l =1}^{k}(M_{i_l})_{\alpha_{l}}^{\beta_{l}}, \quad k = 1, \cdots \floor{L/2}
.
\end{equation}
The commutants are then obtained by taking traceless projections of these operators:
\begin{equation} 
M_{a_{1}\cdots a_{k}}{}^{b_{1}\cdots b_{k}} = (P \cdot \overline{M} \cdot P)_{a_{1}\cdots a_{k}}{}^{b_{1}\cdots b_{k}}, 
\end{equation}
where $P$ denotes the Jones–Wenzl projector. These projected operators can be constructed recursively for each $k$. For example, when $k=2$, the projected commutant takes the form:
\begin{equation}
M_{ab}{}^{cd}  = \overline{M}_{ab}{}^{cd}  - \frac{1}{m}\overline{M}_{a{}'b}{}^{ca{}'}\delta^{a}{}_{d} -\frac{1}{m}\overline{M}_{ab{}'}{}^{b{}'d}\delta^{b}{}_{c}  + \frac{1}{m^2}\overline{M}_{a{}'b{}'}{}^{b{}'a{}'}\delta^{b}{}_{c} \delta^{a}{}_{d} .
\end{equation}
Given the commutants, the dimensions $D_{\lambda}$ and the degeneracies $d_{\lambda}$ of each Krylov subspace can be explicitly computed:
\begin{equation}
\label{eqnSM: D and d of TL}
D_{\lambda} = 
\begin{pmatrix}
    L
    \\
    L/2+\lambda
\end{pmatrix}
-
\begin{pmatrix}
    L
    \\
    L/2+\lambda+1
\end{pmatrix}
\;\;,\;\;\;
d_{\lambda} = \left[ 2\lambda+1 \right]_{q}
,\end{equation}
Here, $\lambda =0, 1, \cdots L/2 $ when $L$ is even, and $\lambda = 1/2, \cdots (L-1)/2$ when $L$ is odd. The quantities $\left[x\right]_q$ denote the $q$-deformed integers, defined as 
\begin{equation}
    \left[x\right]_q = \frac{q^{x} - q^{-x}}{ q - q^{-1}}
,\end{equation}
where $q = \frac{m+\sqrt{m^2-4}}{2}$.

For model with $L=4$ and $m=3$, the dimension of the Hilbert space is $3^4=81$. From the CA approach, and according to \eqref{eqnSM: D and d of TL}, the dimension and degeneracies of the Krylov subspaces are given by $(D_\lambda, d_\lambda) = (2, 1), (3, 8), (1, 55)$. Each Krylov subspace can be generated from its corresponding root state:
\begin{equation}
\label{eqnSM: root states of TL}
\begin{cases}
    \lambda = 0, D_\lambda=2, d_\lambda=1:\quad
    \tket{
    \draw[thick ] (0,0) -- (0.5, 0);
    \draw[thick ] (1.0,0) -- (1.5, 0);
    \node[circle, fill, inner sep=1.5pt] at (0,0) {};
    \node[circle, fill, inner sep=1.5pt] at (0.5,0) {};
    \node[circle, fill, inner sep=1.5pt] at (1.0,0) {};
    \node[circle, fill, inner sep=1.5pt] at (1.5,0) {};
    }
    \\
    \lambda = 1, D_\lambda=3, d_\lambda=8:\quad
        \tket{
    \draw[thick ] (0,0) -- (0.5, 0);
    \node[circle, fill, inner sep=1.5pt] at (0,0) {};
    \node[circle, fill, inner sep=1.5pt] at (0.5,0) {};
    \node[circle, fill, inner sep=1.5pt] at (1.0,0) {};
    \node[circle, fill, inner sep=1.5pt] at (1.5,0) {};
    }
    \\
    \lambda = 2, D_\lambda=1, d_\lambda=55:\quad
        \tket{
    \node[circle, fill, inner sep=1.5pt] at (0,0) {};
    \node[circle, fill, inner sep=1.5pt] at (0.5,0) {};
    \node[circle, fill, inner sep=1.5pt] at (1.0,0) {};
    \node[circle, fill, inner sep=1.5pt] at (1.5,0) {};
    }
\end{cases}
\end{equation}

In the ICPF approach, we substitute integer coefficients into the Hamiltonian: $(J_{1, 2}, J_{2, 3}, J_{3, 4}) = (7, 6, -3)$, and
the characteristic polynomial factors into
\begin{equation}
\begin{aligned}
    &\det(\mu \mathds{1} - H) 
    = [ \phi_1^{(1)}(\mu) ]^{55} \cdot \phi_2^{(2)}(\mu) \cdot [\phi_{3}^{(3)}(\mu)]^{8}
    =
    \mu^{55} (\mu^2 - 30 \mu + 192) (\mu^3 - 
   30 \mu^2 + 3 \mu + 2646)^{8}
.\end{aligned}
\end{equation}
This indicates that the Hilbert space decomposes into a 55-fold degenerate 1-dimensional Krylov subspace, a single 2-dimensional Krylov subspace, and an 8-fold degenerate 3-dimensional Krylov subspace, which matches the predictions from the CA approach. Moreover, from \eqref{eqnSM: root states of TL}, all root states are rational in the Hamiltonian basis, which leads to the conclusion ICPF $=$ CA for $ L = 4$. We explicitly verified that ICPF $=$ CA for $L \leq 8$.

For arbitrary $L$, ICFP $\succeq$ CA can be established by showing that the root states of the Krylov subspace have rational coefficients when expressed in the Hamiltonian basis. These root states, which correspond to the irreducible representations of the bond algebra in the TL model, are explicitly known \cite{aufgebauer2010quantum}. More explicitly, for an irreducible representation labeled by $\lambda$, with $(D_\lambda, d_\lambda)$ given in \eqref{eqnSM: D and d of TL}, the corresponding root states are given by
\begin{equation} \label{eqnSM: TL root states}
  \ket{\psi_{\textrm{dimer}}}_{1, 2}\otimes \ket{\psi_{\textrm{dimer}}}_{3, 4}\cdots \otimes \ket{\psi_{\textrm{dimer}}}_{L-2\lambda -1, L-2\lambda}\otimes \ket{\psi_{\textrm{frozen}}}
,\end{equation}
where $\ket{\psi_{\textrm{frozen}}}$ represents a frozen state for TL model of size $L - 2\lambda$, satisfying \eqref{eqnSM: Frozen states defintion} with $i_s=2\lambda+1$. The root states can be represented graphically as
\begin{equation}
     \tket{
    \draw[thick] (0.0,0) -- (0.5, 0);
    \draw[thick] (1.0,0) -- (1.5, 0);
    \draw[thick] (2.5,0) -- (3.0, 0);
    \node[circle, fill, inner sep=1.5pt] at (0,0) {};
    \node[circle, fill, inner sep=1.5pt] at (0.5,0) {};
    \node[circle, fill, inner sep=1.5pt] at (1,0) {};
    \node[circle, fill, inner sep=1.5pt] at (1.5,0) {};
    \node[circle, fill, inner sep=1.5pt] at (2.5,0) {};
    \node[circle, fill, inner sep=0.3pt] at (1.85,0) {};
    \node[circle, fill, inner sep=0.3pt] at (2.0,0) {};
    \node[circle, fill, inner sep=0.3pt] at (2.15,0) {};
    \node[circle, fill, inner sep=1.5pt] at (3.0,0) {};
    \node[circle, fill, inner sep=1.5pt] at (3.5,0) {};
    \node[circle, fill, inner sep=1.5pt] at (4.0,0) {};
    \node[circle, fill, inner sep=1.5pt] at (5.0,0) {};
    \node[circle, fill, inner sep=0.3pt] at (4.35,0) {};
    \node[circle, fill, inner sep=0.3pt] at (4.5,0) {};     
    \node[circle, fill, inner sep=0.3pt] at (4.65,0) {};
    }.
\end{equation}
There are exactly $d_{\lambda}$ independent frozen states $\ket{\psi_{\text{frozen}}}$, each corresponding to a distinct root state in a $d_\lambda$-fold degenerate Krylov subspace. Since both $\ket{\psi_\text{dimer}}$ and $\ket{\psi_{\text{frozen}}}$ have rational coefficients in the Hamiltonian basis, the root states also admit rational coefficients in the same basis.

\section{The Quantum Breakdown Model}
The (bosonic) quantum breakdown model describes a 1D bosonic chain with size $L$, each site consist of 2 hardcore bosonic modes $b_{i,\nu}$ labeled by $\nu=1,2$, which satisfy the commutation relations 
\begin{equation}
b_{i,\nu}b_{j,\nu'}^\dagger-(1-2\delta_{ij}\delta_{\nu\nu'})b_{j,\nu'}^\dagger b_{i,\nu}=\delta_{ij}\delta_{\nu\nu'}\ .
\end{equation}
The dynamics is governed by the Hamiltonian:
\begin{equation}
    H = 
    \sum_{i=1}^{L-1} \sum_{\nu=1,2} J_{i}^{\nu} (b_{i+1,1}^{\dagger} b_{i+1,2}^{\dagger} b_{i,\nu} 
    + \text{h.c.})
    +
    \sum_{i=1}^{L} \mu_{i} \sum_{\nu=1,2} \hat{n}_{i,\nu}
,\end{equation}
where h.c. stands for Hermitian conjugate, and $\hat{n}_{i,\nu}= b_{i,\nu}^{\dagger} b_{i,\nu}$ is the number operator of mode $\nu$ at site $i$. The potential term in the Hamiltonian counts the total number of bosons at each site; therefore, the commutants of the model conserves the number of bosons at each site. The basis of the model can be visualized as a $2\times 3$ grid of circles, where the three columns correspond to lattice sites $i=1, 2, 3$, and the two rows correspond to different modes. Filled circles ($\bullet$) represent occupied states, while empty circles ($\circ$) represent unoccupied states. For example, $
\begin{smallmatrix}
\circ & \bullet & \circ & \circ \\
\circ & \circ & \circ & \circ \\
\end{smallmatrix}
$ represents a basis with $L=4$, in which the $\nu=1$ mode at site $i=2$ is occupied.

The bond algebra $\mathcal{A}$ is generated by the local terms appearing in the Hamiltonian,
\begin{equation}
    \mathcal{A}
    = \mathrm{gen}\!\left\{
    b_{i+1,1}^{\dagger} b_{i+1,2}^{\dagger} b_{i,\nu}
    + \mathrm{h.c.},\;
    \hat n_{i,1} + \hat n_{i,2}
    \;\middle|\;
    i,\nu
    \right\}.
\end{equation}
The commutants of the model can then be written in terms of the following operators:
\begin{enumerate}
    \item The identity $\mathbb{I}_{i}$
    \item The number operator: $\hat{n}_{i, \nu}$
    \item The hopping operators at each site: $\hat{o}_{i, 1}=b_{i,1} b_{i,2}^{\dagger}$, 
    $\hat{o}_{i, 2}=b_{i,2} b_{i,1}^{\dagger}$
    \item The product of number operators: $\hat{n}_{i}^{(1, 2)} = \hat{n}_{i, 1} \hat{n}_{i, 2}$
\end{enumerate}
Focusing on $L=3$, the commutant algebra $\mathcal{C}$ of the model is generated by arbitrary linear combinations and multiplications of the following generators:
\begin{enumerate}
    \item $\mathbb{I}_{1} \mathbb{I}_{2} \mathbb{I}_{3}$, 
    $\mathbb{I}_{1} \mathbb{I}_{2} \hat{o}_{3, \nu_3}$, and 
    $\mathbb{I}_{1} \hat{o}_{2, \nu_2} \hat{o}_{3, \nu_3}$
    
    \item $\hat{n}_{1}^{(1, 2)} \hat{n}_{2}^{(1, 2)} \hat{n}_{3}^{(1, 2)}$, 
    $\hat{n}_{1}^{(1, 2)} \hat{n}_{2}^{(1, 2)} \hat{o}_{3, \nu_3}$, and 
    $\hat{n}_{1}^{(1, 2)} \hat{o}_{2, \nu_2} \hat{o}_{3, \nu_3}$
    
    \item $\hat{o}_{1, \nu_1} \hat{o}_{2, \nu_2} \hat{o}_{3, \nu_3}$
    
    \item $\hat{n}_{1}^{(1, 2)} \hat{n}_{2}^{(1, 2)} \hat{n}_{3, \nu_3}$, 
    $\hat{n}_{1}^{(1, 2)} \hat{n}_{2, \nu_2} \hat{o}_{3, \nu_3}$, 
    and $\hat{n}_{1, \nu_1} \hat{o}_{2, \nu_2} \hat{o}_{3, \nu_3}$
    
    \item $\mathbb{I}_{1} \left(\hat{n}_{2,1}-\hat{n}_{2,2}\right) \hat{o}_{3, \nu_3}$ 
    and 
    $\mathbb{I}_{1} \left(\hat{n}_{2,1}-\hat{n}_{2}^{(1, 2)}\right) \hat{o}_{3, \nu_3}$

    \item $\hat{n}_{1, \nu_1} \left(\hat{n}_{2,1}-\hat{n}_{2,2}\right) \hat{o}_{3, \nu_3}$ 
    and 
    $\hat{n}_{1, \nu_1} \left(\hat{n}_{2,1}-\hat{n}_{2}^{(1, 2)}\right) \hat{o}_{3, \nu_3}$

    \item $\hat{o}_{1, \nu_1} \left(\hat{n}_{2,1}-\hat{n}_{2,2}\right) \hat{o}_{3, \nu_3}$ 
    and 
    $\hat{o}_{1, \nu_1} \left(\hat{n}_{2,1}-\hat{n}_{2}^{(1, 2)}\right) \hat{o}_{3, \nu_3}$

    \item $\left[ \mathbb{I}_{1} \hat{n}_{2, 1} 
    + \hat{n}_{1, 1} \mathbb{I}_{2}
    + \hat{n}_{1, 2} \mathbb{I}_{2}\right]
    \hat{o}_{3, \nu_3}$

    \item $\left[ \hat{n}_{1, 1} \hat{n}_{2, 1} 
    + \hat{n}_{1, 2} \hat{n}_{2, 1} 
    + \hat{n}_{1}^{(1, 2)} \mathbb{I}_{2}\right]
    \hat{o}_{3, \nu_3}$

    \item $\hat{n}_{1}^{(1, 2)}
    \left[ \hat{n}_{2, 1} \hat{n}_{3, 1} 
    + \hat{n}_{2, 2} \hat{n}_{3, 1} 
    + \hat{n}_{2}^{(1, 2)} \mathbb{I}_{3}\right]$

    \item $2 \hat{n}_{1, 1} \mathbb{I}_{2} \mathbb{I}_{3}
    + 2 \hat{n}_{1, 2} \mathbb{I}_{2} \mathbb{I}_{3}
    + \mathbb{I}_{1} \hat{n}_{2, 1} \mathbb{I}_{3}
    + \mathbb{I}_{1} \hat{n}_{2, 2} \mathbb{I}_{3}
    + \mathbb{I}_{1} \mathbb{I}_{2} \hat{n}_{3, 1}$

    \item $\hat{n}_{1, 1} \mathbb{I}_{2} \mathbb{I}_{3}
    + \hat{n}_{1, 2} \mathbb{I}_{2} \mathbb{I}_{3}
    + \mathbb{I}_{1} \hat{n}_{2}^{(1, 2)} \mathbb{I}_{3}
    + \mathbb{I}_{1} \hat{n}_{2, 1} \hat{n}_{3, 1}
    + \mathbb{I}_{1} \hat{n}_{2, 2} \hat{n}_{3, 1}
    - 2 \mathbb{I}_{1} \hat{n}_{2}^{(1, 2)} \hat{n}_{3, 1}$
\end{enumerate}
The commutants listed above decompose the Hilbert space into CA subspaces that are dynamically disconnected from each other:
\begin{equation}
\label{eqnSM: CA Krylov subspaces QBM}
\begin{aligned}
\mathcal{K}_{\text{CA}, 1} &= 
\text{span}
\left(
    \boxed{\begin{smallmatrix}
    \circ & \bullet & \circ \\
    \circ & \circ & \circ
    \end{smallmatrix}}
    \;,\;
    \boxed{\begin{smallmatrix}
    \circ & \circ & \circ \\
    \circ & \bullet & \circ
    \end{smallmatrix}}
    \;,\;
    \boxed{\begin{smallmatrix}
    \circ & \circ & \bullet \\
    \circ & \circ & \bullet
    \end{smallmatrix}}
\right)
,
\\
\mathcal{K}_{\text{CA}, 2} &= 
\text{span}
\left(
    \boxed{\begin{smallmatrix}
    \bullet & \bullet & \circ \\
    \bullet & \bullet & \circ
    \end{smallmatrix}}
    \;,\;
    \boxed{\begin{smallmatrix}
    \bullet & \bullet & \bullet \\
    \bullet & \circ & \bullet
    \end{smallmatrix}}
    \;,\;
    \boxed{\begin{smallmatrix}
    \bullet & \circ & \bullet \\
    \bullet & \bullet & \bullet
    \end{smallmatrix}}
\right)
,
\\
\mathcal{K}_{\text{CA}, 3} &= 
\text{span}
\left(
    \boxed{\begin{smallmatrix}
    \bullet & \circ & \circ \\
    \circ & \circ & \circ
    \end{smallmatrix}}
    \;,\;
    \boxed{\begin{smallmatrix}
    \circ & \circ & \circ \\
    \bullet & \circ & \circ
    \end{smallmatrix}}
    \;,\;
    \boxed{\begin{smallmatrix}
    \circ & \bullet & \circ \\
    \circ & \bullet & \circ
    \end{smallmatrix}}
    \;,\;
    \boxed{\begin{smallmatrix}
    \circ & \circ & \bullet \\
    \circ & \bullet & \bullet
    \end{smallmatrix}}
    \;,\;
    \boxed{\begin{smallmatrix}
    \circ & \bullet & \bullet \\
    \circ & \circ & \bullet
    \end{smallmatrix}}
\right)
,
\\
\mathcal{K}_{\text{CA}, 4} &= 
\text{span}
\left(
    \boxed{\begin{smallmatrix}
    \bullet & \bullet & \circ \\
    \bullet & \circ & \circ
    \end{smallmatrix}}
    \;,\;
    \boxed{\begin{smallmatrix}
    \bullet & \circ & \circ \\
    \bullet & \bullet & \circ
    \end{smallmatrix}}
    \;,\;
    \boxed{\begin{smallmatrix}
    \bullet & \circ & \bullet \\
    \bullet & \circ & \bullet
    \end{smallmatrix}}
    \;,\;
    \boxed{\begin{smallmatrix}
    \circ & \bullet & \bullet \\
    \bullet & \bullet & \bullet
    \end{smallmatrix}}
       \;,\;
    \boxed{\begin{smallmatrix}
    \bullet & \bullet & \bullet \\
    \circ & \bullet & \bullet
    \end{smallmatrix}}
\right)
,
\\
\mathcal{K}_{\text{CA}, 5} &= 
\text{span}
\left(
    \boxed{\begin{smallmatrix}
    \bullet & \circ & \circ \\
    \bullet & \circ & \circ
    \end{smallmatrix}}
    \;,\;
    \boxed{\begin{smallmatrix}
    \circ & \bullet & \circ \\
    \bullet & \bullet & \circ
    \end{smallmatrix}}
    \;,\;
    \boxed{\begin{smallmatrix}
    \circ & \circ & \bullet \\
    \bullet & \bullet & \bullet
    \end{smallmatrix}}
    \;,\;
    \boxed{\begin{smallmatrix}
    \circ & \bullet & \bullet \\
    \bullet & \circ & \bullet
    \end{smallmatrix}}
    \;,\;
    \boxed{\begin{smallmatrix}
    \bullet & \bullet & \circ \\
    \circ & \bullet & \circ
    \end{smallmatrix}}
    \;,\;
    \boxed{\begin{smallmatrix}
    \bullet & \circ & \bullet \\
    \circ & \bullet & \bullet
    \end{smallmatrix}}
    \;,\;
    \boxed{\begin{smallmatrix}
    \bullet & \bullet & \bullet \\
    \circ & \circ & \bullet
    \end{smallmatrix}}
\right)
,
\\
\mathcal{K}_{\text{CA}, 6} &= 
\text{span}
\left(
    \boxed{\begin{smallmatrix}
    \bullet & \bullet & \circ \\
    \circ & \circ & \circ
    \end{smallmatrix}}
    \;,\;
    \boxed{\begin{smallmatrix}
    \bullet & \circ & \circ \\
    \circ & \bullet & \circ
    \end{smallmatrix}}
    \;,\;
    \boxed{\begin{smallmatrix}
    \bullet & \circ & \bullet \\
    \circ & \circ & \bullet
    \end{smallmatrix}}
    \;,\;
    \boxed{\begin{smallmatrix}
    \circ & \bullet & \bullet \\
    \circ & \bullet & \bullet
    \end{smallmatrix}}
    \;,\;
    \boxed{\begin{smallmatrix}
    \circ & \bullet & \circ \\
    \bullet & \circ & \circ
    \end{smallmatrix}}
    \;,\;
    \boxed{\begin{smallmatrix}
    \circ & \circ & \circ \\
    \bullet & \bullet & \circ
    \end{smallmatrix}}
    \;,\;
    \boxed{\begin{smallmatrix}
    \circ & \circ & \bullet \\
    \bullet & \circ & \bullet
    \end{smallmatrix}}
\right)
,
\\
\mathcal{K}_{\text{CA}, 7;1} &= 
\text{span}
\left(
    \boxed{\begin{smallmatrix}
    \bullet & \circ & \bullet \\
    \circ & \circ & \circ
    \end{smallmatrix}}
    \;,\;
    \boxed{\begin{smallmatrix}
    \circ & \circ & \bullet \\
    \bullet & \circ & \circ
    \end{smallmatrix}}
    \;,\;
    \boxed{\begin{smallmatrix}
    \circ & \bullet & \bullet \\
    \circ & \bullet &\circ
    \end{smallmatrix}}
\right)
,
\\
\mathcal{K}_{\text{CA}, 7;2} &= 
\text{span}
\left(
    \boxed{\begin{smallmatrix}
    \bullet & \circ & \circ \\
    \circ & \circ & \bullet
    \end{smallmatrix}}
    \;,\;
    \boxed{\begin{smallmatrix}
    \circ & \circ &\circ \\
    \bullet & \circ & \bullet
    \end{smallmatrix}}
    \;,\;
    \boxed{\begin{smallmatrix}
    \circ & \bullet & \circ \\
    \circ & \bullet& \bullet
    \end{smallmatrix}}
\right)
,
\\
\mathcal{K}_{\text{CA}, 8;1} &= 
\text{span}
\left(
    \boxed{\begin{smallmatrix}
    \bullet & \circ & \bullet\\
    \bullet & \circ & \circ
    \end{smallmatrix}}
    \;,\;
    \boxed{\begin{smallmatrix}
    \bullet & \bullet &\bullet \\
    \circ & \bullet & \circ
    \end{smallmatrix}}
    \;,\;
    \boxed{\begin{smallmatrix}
    \circ & \bullet & \bullet \\
    \bullet & \bullet& \circ
    \end{smallmatrix}}
\right)
,
\\
\mathcal{K}_{\text{CA}, 8;2} &= 
\text{span}
\left(
    \boxed{\begin{smallmatrix}
    \bullet & \circ & \circ\\
    \bullet & \circ & \bullet
    \end{smallmatrix}}
    \;,\;
    \boxed{\begin{smallmatrix}
    \bullet & \bullet &\circ \\
    \circ& \bullet & \bullet
    \end{smallmatrix}}
    \;,\;
    \boxed{\begin{smallmatrix}
    \circ & \bullet & \circ \\
    \bullet & \bullet& \bullet
    \end{smallmatrix}}
\right)
,
\end{aligned}
\end{equation}
where we list only the Krylov subspaces with dimensions greater than one. The Krylov subspaces $\mathcal{K}_{\text{CA}, 1}$, $\mathcal{K}_{\text{CA}, 3}$, $\mathcal{K}_{\text{CA}, 5}$, $\mathcal{K}_{\text{CA}, 7}$ are related to $\mathcal{K}_{\text{CA}, 2}$, $\mathcal{K}_{\text{CA}, 4}$, $\mathcal{K}_{\text{CA}, 6}$, $\mathcal{K}_{\text{CA}, 8}$, respectively, by particle-hole symmetry. The Krylov subspaces $\mathcal{K}_{\text{CA}, 7}$ and $\mathcal{K}_{\text{CA}, 8}$ are both doubly degenerated. We showed that they are indeed irreducible representations of the bound algebra ${\cal A}$ by verifying that all commutants are proportional to identity once projected onto the Krylov subspaces. By the virtue of Schur's lemma, all ${\cal K}_{\text{CA}}$ listed above are irreducible \cite{grillet2007abstract}. Moreover, the Krylov subspaces listed above are spanned by classical product basis; therefore, the model exhibits classical fragmentation.

In the ICPF approach, all the Krylov subspaces listed in \eqref{eqnSM: CA Krylov subspaces QBM} are further decomposed into smaller ICPF Krylov subspaces. As discussed in our proof in \cref{sectionSM: formal proof of ICPF <= CA} , all smaller ICPF Krylov subspaces depends explicitly on the Hamiltonian coefficients $J_{i}^{\nu}$:
\begin{equation}
\begin{aligned}
    \mathcal{K}_{\text{CA}, 1} =& 
    \text{span}
    \left(
        \boxed{\begin{smallmatrix}
        \bullet & \circ & \circ\\
        \bullet & \circ & \bullet
        \end{smallmatrix}}
        \;,\;
        J_{1}^{1}
        \boxed{\begin{smallmatrix}
        \circ & \bullet & \circ \\
        \bullet & \bullet& \bullet
        \end{smallmatrix}}
        +
        J_{1}^{2}
        \boxed{\begin{smallmatrix}
        \bullet & \bullet &\circ \\
        \circ& \bullet & \bullet
        \end{smallmatrix}}
    \right)
    \oplus
    \text{span}
    \left(
        J_{1}^{2}
        \boxed{\begin{smallmatrix}
        \circ & \bullet & \circ \\
        \bullet & \bullet& \bullet
        \end{smallmatrix}}
        -
        J_{1}^{1}
        \boxed{\begin{smallmatrix}
        \bullet & \bullet &\circ \\
        \circ& \bullet & \bullet
        \end{smallmatrix}}
    \right)
    \\[2ex]
  \mathcal{K}_{\text{CA}, 3} =& 
    \text{span}
    \left(
        \boxed{\begin{smallmatrix}
        \circ & \bullet & \circ\\
        \circ & \bullet & \circ
        \end{smallmatrix}}
        \;,\;
        J_{1}^{1}
        \boxed{\begin{smallmatrix}
        \bullet & \circ & \circ \\
        \circ & \circ& \circ
        \end{smallmatrix}}
        +
        J_{1}^{2}
        \boxed{\begin{smallmatrix}
        \circ & \circ &\circ \\
        \bullet& \circ & \circ
        \end{smallmatrix}}
        \;,\;
        J_{2}^{1}
        \boxed{\begin{smallmatrix}
        \circ & \circ & \bullet \\
        \circ & \bullet& \bullet
        \end{smallmatrix}}
        +
        J_{2}^{2}
        \boxed{\begin{smallmatrix}
        \circ & \bullet &\bullet \\
        \circ& \circ & \bullet
        \end{smallmatrix}}
    \right) \\
         & \oplus \text{span} \left(
        J_{1}^{2}
        \boxed{\begin{smallmatrix}
        \bullet & \circ & \circ \\
        \circ & \circ& \circ
        \end{smallmatrix}}
        -
        J_{1}^{1}
        \boxed{\begin{smallmatrix}
        \circ & \circ &\circ \\
        \bullet & \circ & \circ
        \end{smallmatrix}}
    \right)
    \oplus
    \text{span}
    \left(
        J_{2}^{2}
        \boxed{\begin{smallmatrix}
        \circ & \circ & \bullet \\
        \circ & \bullet& \bullet
        \end{smallmatrix}}
        -
        J_{2}^{1}
        \boxed{\begin{smallmatrix}
        \circ & \bullet &\bullet \\
        \circ & \circ & \bullet
        \end{smallmatrix}}
    \right)
    \\[2ex]
    \mathcal{K}_{\text{CA}, 5} =& 
    \text{span}
    \left(
        \boxed{\begin{smallmatrix}
        \bullet & \circ & \circ\\
        \bullet & \circ & \circ
        \end{smallmatrix}}
        \;,\;
        J_{1}^{1}
        \boxed{\begin{smallmatrix}
        \circ & \bullet & \circ \\
        \bullet & \bullet& \circ
        \end{smallmatrix}}
        +
        J_{1}^{2}
        \boxed{\begin{smallmatrix}
        \bullet & \bullet &\circ \\
        \circ& \bullet & \circ
        \end{smallmatrix}}
        \;,\;
        J_{1}^{1}
        J_{2}^{1}
        \boxed{\begin{smallmatrix}
        \circ & \circ & \bullet \\
        \bullet & \bullet& \bullet
        \end{smallmatrix}}
        +
        J_{1}^{1}
        J_{2}^{2}
        \boxed{\begin{smallmatrix}
        \circ & \bullet &\bullet \\
        \bullet& \circ & \bullet
        \end{smallmatrix}}
        +
        J_{1}^{2}
        J_{2}^{1}
        \boxed{\begin{smallmatrix}
        \bullet & \circ & \bullet \\
        \circ & \bullet&  \bullet
        \end{smallmatrix}}
        +
        J_{1}^{2}
        J_{2}^{2}
        \boxed{\begin{smallmatrix}
        \bullet & \bullet &\bullet \\
        \circ& \circ & \bullet
        \end{smallmatrix}}
    \right) \\[1ex]
        & \oplus \text{span}
        \left( 
        J_{1}^{2}
        \boxed{\begin{smallmatrix}
        \circ & \bullet & \circ \\
        \bullet & \bullet&\circ
        \end{smallmatrix}}
        -
        J_{1}^{1}
        \boxed{\begin{smallmatrix}
        \bullet & \bullet &\circ \\
        \circ& \bullet &   \circ
        \end{smallmatrix}}
        \;,\;
        J_{1}^{2}
        J_{2}^{1}
        \boxed{\begin{smallmatrix}
        \circ & \circ & \bullet \\
        \bullet & \bullet& \bullet
        \end{smallmatrix}}
        +
        J_{1}^{2}
        J_{2}^{2}
        \boxed{\begin{smallmatrix}
        \circ & \bullet &\bullet \\
        \bullet& \circ & \bullet
        \end{smallmatrix}}
        -
        J_{1}^{1}
        J_{2}^{1}
        \boxed{\begin{smallmatrix}
        \bullet & \circ & \bullet \\
        \circ & \bullet&  \bullet
        \end{smallmatrix}}
        -
        J_{1}^{1}
        J_{2}^{2}
        \boxed{\begin{smallmatrix}
        \bullet & \bullet &\bullet\\
        \circ& \circ & \bullet
        \end{smallmatrix}}
    \right) \\[1ex]
         & \oplus \text{span} \left(
        J_{2}^{1}
        \boxed{\begin{smallmatrix}
        \bullet & \bullet &\bullet\\
        \circ& \circ & \bullet
        \end{smallmatrix}}
       -
        J_{2}^{2}
        \boxed{\begin{smallmatrix}
        \bullet & \circ & \bullet \\
        \circ & \bullet&  \bullet
        \end{smallmatrix}}
    \right)
    \oplus
    \text{span}
    \left(
        J_{2}^{1}
        \boxed{\begin{smallmatrix}
        \circ & \circ & \bullet \\
        \bullet & \bullet& \bullet
        \end{smallmatrix}}
        -
        J_{2}^{2}
        \boxed{\begin{smallmatrix}
        \circ & \bullet &\bullet \\
        \bullet& \circ & \bullet
        \end{smallmatrix}}
    \right)
    \\[2ex]
    \mathcal{K}_{\text{CA}, 7;1} =& 
    \text{span}
    \left(
        \boxed{\begin{smallmatrix}
        \circ & \bullet & \bullet\\
        \circ & \bullet & \circ
        \end{smallmatrix}}
        \;,\;
        J_{1}^{1}
        \boxed{\begin{smallmatrix}
        \bullet & \circ & \bullet \\
        \circ & \circ& \circ
        \end{smallmatrix}}
        +
        J_{1}^{2}
        \boxed{\begin{smallmatrix}
        \circ & \circ &\bullet \\
        \bullet& \circ & \circ
        \end{smallmatrix}}
        \right)
        \oplus
        \text{span}
        \left(
        J_{1}^{2}
        \boxed{\begin{smallmatrix}
        \bullet & \circ & \bullet \\
        \circ & \circ& \circ
        \end{smallmatrix}}
        -
        J_{1}^{1}
        \boxed{\begin{smallmatrix}
        \circ & \circ & \bullet\\
        \bullet & \circ & \circ
        \end{smallmatrix}}
        \right)
    \\[2ex]
    \mathcal{K}_{\text{CA}, 7;2} =& 
    \text{span}
    \left(
        \boxed{\begin{smallmatrix}
        \circ & \bullet & \circ\\
        \circ & \bullet & \bullet
        \end{smallmatrix}}
        \;,\;
        J_{1}^{1}
        \boxed{\begin{smallmatrix}
        \bullet & \circ & \circ \\
        \circ & \circ& \bullet
        \end{smallmatrix}}
        +
        J_{1}^{2}
        \boxed{\begin{smallmatrix}
        \circ & \circ &\circ \\
        \bullet& \circ & \bullet
        \end{smallmatrix}}
        \right)
        \oplus
        \text{span}
        \left(
        J_{1}^{2}
        \boxed{\begin{smallmatrix}
        \bullet & \circ & \circ \\
        \circ & \circ& \bullet
        \end{smallmatrix}}
        -
        J_{1}^{1}
        \boxed{\begin{smallmatrix}
        \circ & \circ & \circ\\
        \bullet & \circ & \bullet
        \end{smallmatrix}}
        \right)
    \\[2ex]
  \end{aligned}
\end{equation}

\end{document}